\newtheorem{problem}{Problem}
\begin{document}

\title{Random Multi-Hopper Model. Super-Fast Random Walks on Graphs}

\author{Ernesto Estrada\footnotemark[2], Jean-Charles Delvenne\footnotemark[3] \footnotemark[4], Naomichi
Hatano\footnotemark[5], Jos\'{e} L. Mateos\footnotemark[6], Ralf Metzler\footnotemark[7], Alejandro
P. Riascos\footnotemark[8], Michael T. Schaub\footnotemark[3] \footnotemark[4] \footnotemark[9]}

\maketitle

\global\long\def\thefootnote{\fnsymbol{footnote}}

\footnotetext[2]{Department of Mathematics \& Statistics, University of Strathclyde,
26 Richmond Street, Glasgow G11HQ, UK }

\footnotetext[3]{Department of Mathematical Engineering, Universit\'{e} catholique
de Louvain, 4 Avenue Georges Lem\^{a}\i tre, B-1348 Louvain-la-Neuve,
Belgium }

\footnotetext[4]{D\'{e}partement de Math\'{e}matique, University of Namur, 8 Rempart de la Vierge, B-5000 Namur, Belgium }

\footnotetext[5]{Institute of Industrial Science, University of Tokyo, Komaba,
Meguro, Tokyo 153-8505, Japan}

\footnotetext[6]{Instituto de F\i sica, Universidad Nacional Aut\'{o}noma de M\'{e}xico,
Apartado Postal 20-364, 01000 M\'{e}xico, Ciudad de M\'{e}xico, M\'{e}xico}

\footnotetext[7]{Institute for Physics \& Astronomy, University of Potsdam,
14476 Potsdam-Golm, Germany }

\footnotetext[8]{Department of Civil Engineering, Universidad Mariana, San
Juan de Pasto, Colombia}

\footnotetext[9]{present address: Institute for Data, Systems, and Society, Massachusetts 
Institute of Technology, Cambridge, USA}

\global\long\def\thefootnote{\arabic{footnote}}

\begin{abstract}
We develop a model for a random walker with long-range hops on general
graphs. This random multi-hopper jumps from a node to any other node in
the graph with a probability that decays as a function of the
shortest-path distance between the two nodes. We consider here two
decaying functions in the form of the Laplace and Mellin transforms of
the shortest-path distances. Remarkably, when the parameters of these
transforms approach zero asymptotically, the multi-hopper's hitting
times between any two nodes in the graph converge to their minimum
possible value, given by the hitting times of a normal random walker on
a complete graph. Stated differently, for small parameter values the
multi-hopper explores a general graph as fast as possible when compared
to a random walker on a full graph.
Using computational experiments we show that compared to the normal
random walker, the multi-hopper indeed explores graphs with clusters or
skewed degree distributions more efficiently for a large parameter
range. We provide further computational evidence of the speed-up
attained by the random multi-hopper model with respect to the normal
random walker by studying deterministic, random and real-world networks.
\end{abstract}

\section{Introduction}

Few mathematical models have found so many applications in
the physical, chemical, biological, social and economical sciences
as the random walk model~\cite{klafter2011first,masuda2016random}. The term ``random
walk'' was first proposed by K.~Pearson~\cite{pearson1905problem}
in an informal question posted in Nature in 1905.
Pearsons description reads as ``\textit{A man starts from a point $O$ and walks $l$ yards
in a straight line; he then turns through any angle whatever and walks
another $l$ yards in a second straight line. He repeats this process
$n$ times}.'' Among the first respondents to Pearson, 
Lord Rayleigh~\cite{rayleigh1905problem} 
already pointed out a connection existing
between random walks and other physical processes, namely that
a random walk ``\textit{is the same as that of the composition of
$n$ iso-periodic vibrations of unit amplitude and of phases distributed
at random}''. Connections like these between random walks and many physico-chemical,
biological and socio-economic processes are what guarantees the great
vitality of this research topic. This includes relations between random
walks, Brownian motion and diffusive processes in general~\cite{domb1978random},
the connection between random walks and the classical theory of 
electricity~\cite{doyle1984random,nash1959random}, and
the formulation of the efficient
market hypothesis~\cite{malkiel1991efficient}, among others. In
order to distinguish the originally-proposed random walk model from
its many varieties discussed in the literature we will
use the term normal random walk (NRW).

Many real world complex systems are more faithfully
represented as a network than as continuous system. This includes
ecological and biomolecular, social and economical as well as infrastructural
and technological networks~\cite{estrada2012structure}. The use
of random walk models in these systems also provides a large variety
of possibilities ranging from the analysis of the diffusion of information
and navigability on these networks to the exploration of their structures
to detect their fine-grained organization~\cite{noh2004random}.
From a mathematical point of view, these networks are nothing but (weighted)
graphs. We will use both terms interchangeably here with
preference to the term network for the case they are representing
some real-world system. 

The first work exploring the use of random
walks on graphs is credited to P\'{o}lya in 1921 when he was walking in
a park and he crossed the same couple very often~\cite{polya1921arithmetische}.
Thus, he asked the important question of the recurrence of a random
walker in an infinite regular lattice. Known today as \textit{P\'{o}lya's
recurrence theorem}, it states that a simple random walk on a $d$-dimensional
lattice is recurrent for $d=1,2$ and transient for $d>2$. Random
walks have been studied also on (regular) lattices, where many papers
in the physics literature have used this method for the study of diffusion
of atoms on a lattice. The classical papers of E.~W. Montroll in the
1960's and 1970's paved the way for the study of these phenomena~\cite{montroll1965random,montroll1969random,montroll1973random}. 

In the mid 1990's the group of G.~Ehrlich~\cite{senft1995long}
observed experimentally the self-diffusion of weakly bounded Pd atoms and made an interesting observation:
there were significant contributions to the thermodynamical
properties of the system from jumps spanning second and third nearest-neighbors
in the metallic surface, which can be considered
as a regular lattice. In 1997 the group of F.~Besenbacher~\cite{linderoth1997surface}
observed experimentally that for the self-diffusion of Pt atoms on a
Pt(110) surface, the jumps from non-nearest neighbors also contribute to the diffusion. Even more surprising
are the results of the same group, when they studied the diffusion of two large
organic molecules on a Cu(110) surface. In this case,
using scanning tunneling microscopy, they observed that long jumps
play a dominating role in the diffusion of the two organic molecules,
with root-mean-square jump lengths as large as $3.9$ and $6.8$ lattice
spacings~\cite{schunack2002long}. Since then the role of long-jumps
in adatom and molecules diffusing on metallic surface has been both
theoretically and experimentally confirmed in many different 
systems~\cite{yu2013single,ala2002collective}. Due to these experimental
evidence there has been some attempts to consider long-range jumps
in the diffusion of a particle on a regular lattice. The first of
them was the paper entitled \textit{``Lattice walks by long jumps}''
by Wrigley, Twigg and Ehrlich~\cite{wrigley1990lattice}.
Other works
have considered the space in which the diffusion takes place
to be continuous and applied the random-walk model with L\'{e}vy flights
to model these long-range effects (see below). However, the development
of a general multi-hopper model, in which a random walker hops to
any node of a general graph with probabilities depending on the
distance separating the corresponding nodes is still missing in the
literature. Apart from the physical scenarios related to the diffusion
of adatoms and admolecules on metallic surface, long-range jumps on
graphs are of a general interest. For instance, in social networks
one can take advantage of the full or partial knowledge of the network
beyond first acquaintances to diffuse information in a swifter way
than can be done by the traditional nearest-neighbor only strategy.
In exploring technological and infrastructural networks we can exploit
our knowledge of the topology of the network to jump from a position
to non-nearest-neighbors in such a way that we reach vaster regions
of the system explored in shorter times. 

In 2012 two groups published independently models that are designed
to account for all potential long jumps that a random walker can take
on a graph. Mateos and Riascos~\cite{riascos2012long}
proposed a random walk model where jumps occur to non-nearest
neighbors with a probability that decays as a power-law of the shortest-path
distance separating the two nodes (all formal definitions are done
in the Preliminaries section of this paper).  
Estrada~\cite{estrada2012path} generalized the concept of graph Laplacian
by introducing the $k$-path Laplacians, and used it to study generalized diffusion equations
to assess the influence of long-range
jumps on graphs. While the first paper by Mateos and Riascos provides
a probabilistic approach to the problem, the latter one by Estrada provides the
algebraic tools needed for its generalization and mathematical formalization. 

The combination of short- and long-range jumps in random walks is usually considered 
in the continuous space. This process is known as L\'{e}vy flight (LF)~\cite{shlesinger1995levy}.
LFs are widely
used to model efficient search processes, for instance, of animals
searching for sparse food. Due to the scale-free
nature of the underlying distribution of jump lengths,
the combination of more local search and occasional long, decorrelating excursions leads to less oversampling
and thus reduces typical search times, prompting the L{\'{e}}vy foraging
hypothesis~\cite{viswanathan2011physics}. This efficiency may, however,
be compromised in the presence of an external bias~\cite{palyulin2014levy}.
The efficiency of LFs in the random search context is, of course,
also a function of the dimensionality of the embedding space. In dimensions
three and above, when regular random walks become transient, LFs gradually lose
their advantage as an efficient strategy.  
LFs are Markovian random walks. At every jump the step length is drawn from a long-tailed probability
density function with the power-law decay~\cite{metzler2000random,metzler2004restaurant,hughes1996random}.
The Markovian requirement means that at every jump all memory to previous
jumps is erased, and thus the jump lengths $x$ are independent and
identically distributed random variables~\cite{bouchaud1990anomalous}.
What happens when the Markovian character of the random walk is broken?
A concrete example is an effective LF along a long polymer chain when
shortcuts are allowed at locations where the polymer loops back to
itself. The analogy with graphs runs in the following way. A long
polymer can be considered as an infinite path graph---a graph in which
all the nodes have degree two but two nodes are of degree one. Then, when
the polymer folds, some regions approach closely to others forming
a loop. In the graph this corresponds to the creation of cycles in
the graph. In the polymer it has been observed that as the length
stored in such loops has a power-law distribution, in the coordinate
system of the arc length along the polymer the shortcuts effectively
lead to a jump process with long tail~\cite{lomholt2005optimal,sokolov1997paradoxal}.
When the chain configurations relax quickly and, after taking a shortcut,
the random walker cannot use the same shortcut again, individual jumps
are indeed independent and thus the process is a true LF. However,
when the chain configurations are much slower compared to the rate
of shortcut events, correlations between present and previous jumps
become important: the random walker may use a shortcut repeatedly,
or it may become stuck in one of the loops (cycles). In the latter case, a
scaling argument shows that long jump lengths are combined with long,
power-law trapping times in cycles, and effectively the overall motion
along the polymer chain is characterized by a \textit{linear time dependence}
of the mean squared displacement~\cite{sokolov1997paradoxal}. Thus,
this last process cannot be considered as a a true LF. A similar situation
is observed in complex networks and in many general graphs where the
random walker can get stuck in small regions of the graph due to
heterogeneities in degree distribution or due to local structural
heterogeneities, such as the existence of network communities, which
will destroy the LF nature. To what extent this happens and what the 
remaining advantage of long-ranged hops is, must be analysed in the actual setting.

In this work we will consider the generalized formulation of a random
walk model with long-range jumps that decay as a function---not necessarily
a power-law---of the shortest path distance between the nodes. We
propose to call this model the \textit{random multi-hopper}
(RMH). We proceed by using the random walk--electrical networks connection 
as discussed in detail by Doyle~\cite{doyle1984random} in order to formulate mathematically the RMH model
and some of its main parameters, namely the hitting and commute times.
We prove that for certain asymptotic values of the
parameters of the model, the average hitting time of the random walker
is the smallest possible. We further study 
some deterministic graphs for which extremal properties of random
walks are known, such as lollipop, barbell and path graphs. Then,
we move to the analysis of random networks, in particular we explore
the Erd\H{o}s-R\'{e}nyi and the Barab\'{a}si-Albert models. We finally
study a few real-world networks representing a variety of complex systems. 
In all cases we compare the RMH with the NRW model and conclude
that the multi-hopper overcomes several of the difficulties that a
normal random walker has to explore a graph. In particular, the multi-hopper
explores graphs with clusters of highly interconnected
nodes and graphs with very skewed degree distributions more efficiently than a normal random walker. 
As these characteristics are omnipresent in real-world networks this makes the random multi-hopper an excellent choice
for transport and search on complex systems.

\section{Preliminaries}

We introduce in this section some definitions and properties
associated with random walks on graphs and set the notation used
throughout the work. A \textit{graph} $G=(V,E)$ is defined by a set
of $n$ nodes (vertices) $V$ and a set of $m$ edges $E=\{(u,v)|u,v\in V\}$
between the nodes. All the graphs considered in this work are finite,
undirected, simple, without self-loops, and connected. A \textit{path}
of length $k$ in $G$ is a sequence of different nodes $u_{1},u_{2},\ldots,u_{k},u_{k+1}$
such that for all $1\leq l\leq k$, $(u_{l},u_{l+1})\in E$. The length
of a shortest path between two nodes $i$ and $j$ constitutes a distance
function, here designated by $d\left(i,j\right)$, which is known
as the \textit{shortest-path distance} between the nodes $i$ and
$j$. Let $A=\left(a_{ij}\right)_{n\times n}$ be the \textit{adjacency
matrix} of the graph where $a_{ij}=1$ if and only if $\left(i,j\right)\in E$
and it is zero otherwise. The \textit{degree} of a node $i$ is
the number of nodes adjacent to it and it is designated here by $k\left(i\right) = \sum_j a_{ij}$.

A \textit{random walk} on a graph is a random sequence of vertices
generated as follows. Given a starting vertex $i$ we select a neighbor
$j$ at random, and move to this neighbor. Then we select a neighbor
$k$ of $j$ uniformly at random, move to it, and so on~\cite{lovasz1993random,aldous2002reversible}.
This sequence of random nodes $v_{t}:t=0,1,\ldots$ is a Markov chain, with probability distribution encoded in the following vector:
\begin{equation}
\mathbf{p}_{t}\left(q\right)=\Pr\left(v_{t}=q\right).
\end{equation}
The \textit{matrix of transition probabilities} is $P=\left(p_{ij}\right)_{i,j\in V}$ is defined via:
\begin{equation}
p_{ij}=\dfrac{a_{ij}}{k(i)}.
\end{equation}

Let $K$ be the diagonal matrix whose entries $K_{ii}=k(i)$.
With this definition we can write the above matrix compactly as $P=K^{-1}A$. 
The vector containing the probability of finding
the walker at a given node of the graph at time $t$ is
\begin{equation}
\mathbf{p}_{t}=\left(P^{T}\right)^{t}\mathbf{p}_{0},
\end{equation}
where $T$ represents the transpose of the matrix and $\mathbf{p}_{0}$
is the initial probability distribution of the random walker.

The following are important characteristics of a random walk on a graph
which are of direct utility in the current work. For a random walk
starting at the node $i$, the expected number of steps before it
reaches the node $j$ is known as the \textit{hitting time} and it
is denoted by $H\left(i,j\right)$. The expected number of steps
in a random walk starting at node $i$, before the random walker visits
the node $j$, and then visits node $i$ again is known as the \textit{commute
time}, and it is denoted by $\kappa\left(i,j\right)$. Both quantities
are related by
\begin{equation}
\kappa\left(i,j\right)=H\left(i,j\right)+H\left(j,i\right).
\end{equation}

\section{Random Multi-Hopper Model}

\subsection{Intuition of the model}

For a normal random walk on a graph, the random walker makes steps
of length one in terms of the number of edges traveled.
After each step she throws again the dice to decide where to move. 
As analogy we may think of a drunkard who does not remember
where her home is. Thus she stops at every junction that she finds
and takes a decision of where to go next in a random way.

Let us now suppose the existence of a random walker who does not necessarily
stop at an adjacent node of her current position. That is, suppose
that the random walker placed at the node $i$ of the graph selects
any node $q$ of the graph to which she wants to move. Let us consider
that the shortest path distance between $i$ and $q$ is $d\left(i,q\right)$.
If $d\left(i,q\right)>1$ the random walker will not stop at any of
the intermediate nodes between $i$ and $q$, but she will go directly
to that node. In our analogy this would correspond to a drunkard who thinks she
remembers where her home is, then she walks a few blocks without stopping
at any junction until she arrives at a given point where she realizes
she is lost. After this she repeats the process again. Therefore the movements
of the drunkard can be represented by a graph, the edges of correspond to shortest
paths in the original graph $G$, and the probability
for the random walker to jump straight from node $i$ to node $q$
is proportional to a certain weight $\omega\left(i,q\right)$, which is
a function of the distance $d\left(i,q\right)$ in $G$. As examples
of decaying functions of the shortest path distance, we mention
$\omega\left(i,q\right)=\exp\left(-l\cdot d\left(i,q\right)\right)$, $\omega\left(i,q\right)=\left(d\left(i,q\right)\right)^{-s}$
and $\omega\left(i,q\right)=z^{-d\left(i,q\right)}$, for $l>0$, $s>0$,
and $z>1$, respectively. Hereafter we will consider the first two
for the analysis, called respectively the Laplace transform case and
the Mellin transform case, by analogy with these transforms. 
The specific 
details in which we implement these transforms are given in the next section, 
where we used a slight variation of the Laplace transform.

As an example let us consider a one-dimensional linear chain of 5 nodes labeled as 1---2---3---4---5, 
and a weight function
$\omega\left(i,q\right)=\exp\left(-0.5\cdot d\left(i,q\right)\right)$.
If the drunkard is placed at the node 1 she has the following probabilities
of having a walk of length 1, 2, 3 or 4: 0.46, 0.28, 0.18 and 0.10,
respectively. That is, the probability that she stops at the
nearest node from her current position is still higher than that for the
rest of nodes, but the last ones are not zero like in the classical
random walk. If $l\to\infty$, the probabilities approach those of 
the classical random walker. For instance if $\omega\left(i,q\right)=\exp\left(-5\cdot d\left(i,q\right)\right)$
the probabilities of having a walk of length 1, 2, 3 or 4 are: 0.9933,
0.0067, 0.000 and 0.000, respectively. For $l>10$ we effectively
recover the classical random walk model, and have numerical probabilities of 1, 0,
0, and 0 for the walks of length 1, 2 , 3, and 4, respectively. This
shows how the current model is a generalization of the
classical random walk model.

\subsection{Mathematical formulation}

Consider a connected graph $G=\left(V,E\right)$ with $n$ nodes. 
Let $d_\textrm{max}$ be the graph diameter, i.e., the maximum shortest path
distance in the graph. Let us now define the $d$-path adjacency matrix
($d\leq d_\textrm{max}$), denoted by $A_{d}$, as the symmetric $n\times n$ matrix, with entries:
\begin{equation}
A_{d}\left(i,j\right)=\left\{ \begin{array}{c}
1\\
0
\end{array}\right.\begin{array}{c}
if\ d_{ij}=d,\\
otherwise,
\end{array}
\end{equation}
where $d_{ij}$ is the shortest path distance, i.e., the number of
edges in the shortest path connecting the nodes $i$ and $j$.
The $d$-path degree of the node $i$ is given by~\cite{estrada2012path,riascos2012long}
\begin{equation}
    k_{d}\left(i\right)=\left(A_{d}\mathbf{1}\right)_{i}
\end{equation}
where $\mathbf{1}$ is an all-ones column vector. 

Let us now consider the following transformed $k$-path adjacency
matrices~\cite{estrada2012path}:
\begin{equation}\label{eq:generalized operator}
\hat{A}^{\tau}=
\begin{cases}
\sum_{d=1}^{d_\textrm{max}}d^{-s}A_{d}
&\quad\mbox{for $\tau=\textnormal{Mel}$,} \\
A_1+\sum_{d=2}^{d_\textrm{max}}\exp(-l\cdot d)A_{d}
&\quad\mbox{for $\tau=\textnormal{Lapl}$,}
\end{cases}
\end{equation}
with $s>0$ and $l>0$, which are the Mellin and Laplace transforms, respectively.

Let us define the generalized degree of a given node as
\begin{equation}
\hat{k}^{\tau}\left(i\right)=\left(\hat{A}^{\tau}\mathbf{1}\right)_{i}.\label{eq:generalized degree}
\end{equation}
Now we define the probability that a particle staying at node $i$
hops to the node $j$ as
\begin{equation}
P^{\tau}\left(i,j\right)=\frac{\hat{A}^{\tau}\left(i,j\right)}{\hat{k}^{\tau}\left(i\right)}.
\end{equation}
Notice that if do not consider any long-range interaction, then $\hat{A}^{\tau}=A$
and $P^{\tau}\left(i,j\right)= P$. That is, we recover the classical random walk probability.

Let us denote by $\hat{K}^{\tau}$ the diagonal matrix with $\hat{K}^{\tau}\left(i,i\right)=\hat{k}^{\tau}\left(i\right)$
and let us define the matrix $\hat{P}^{\tau}=(\hat{K}^{\tau})^{-1}\hat{A}^{\tau}$.
Then, the evolution equation ruling the states of the walker at a
given time step is given by
\begin{equation}
\mathbf{p}_{t+1}=\left(\hat{P}^{\tau}\right)^{T}\mathbf{p}_{t}.
\end{equation}

\subsection{$k$-path Laplacians, Hitting and Commute Times}

In a similar way as the Laplacian matrix is introduced for graph we
define the Laplacian matrix corresponding to Eq.~(\ref{eq:generalized operator}).
That is,
\begin{equation}
\hat{L}^{\tau}=\hat{K}^{\tau}-\hat{A}^{\tau},\label{eq:k-path Laplacian}
\end{equation}
where $K^{\tau}$ is the diagonal matrix of generalized degree $\hat{k}^{\tau}\left(i\right)$
defined in Eq.~(\ref{eq:generalized degree}) and $A^{\tau}$ is the generalized
adjacency matrix defined in Eq.~(\ref{eq:generalized operator}). 
This is the Laplacian of the graph $G^{\tau}$, the (weighted) adjacency
matrix of which is $A^{\tau}$. As a result, this generalized Laplacian
$\hat{L}^{\tau}$ is positive semi-definite~\cite{estrada2012path}.

The graph $G^{\tau}$ can be seen as a network of resistances, with
the entry of $A^{\tau}$ representing the conductance (inverse resistance)
of the connection between two nodes. By assuming that an electric
current is flowing through the network $G^\tau$ by entering at node
$i$ and leaving at node $j$ we can calculate the effective resistance
between these two nodes as follows:
\begin{equation}
\hat{\varOmega}^{\tau}\left(i,j\right)=\hat{\mathcal{L}}^{\tau}\left(i,i\right)+\hat{\mathcal{L}}^{\tau}\left(j,j\right)-2\hat{\mathcal{L}}^{\tau}\left(i,j\right),
\end{equation}
where $\hat{\mathcal{L}}^{\tau}$ is the Moore-Penrose pseudo-inverse
of the generalized Laplacian matrix. It is well known that the analogous
of this effective resistance for the simple graph is a distance between
the corresponding pair of nodes~\cite{klein1993resistance}. It is straightforward to show that
this is also the case here and we will call $\hat{\varOmega}^{\tau}\left(i,j\right)$
the \textit{generalized effective resistance} between the nodes $i$ and $j$ in a graph.

The sum of all resistance distances in a graph is know as the Kirchhoff
index of the graph~\cite{klein1993resistance}. In the context of the multi-hopper model it can
be defined in a similar way as
\begin{equation}
\hat{\varOmega}^{\tau}_\mathrm{tot}=\sum_{i<j}\hat{\varOmega}^{\tau}\left(i,j\right)=\dfrac{1}{2}\mathbf{1}^{T}\hat{\varOmega}^{\tau}\mathbf{1}.
\end{equation}
Then, an extension of a result obtained by Nash-Williams~\cite{nash1959random}
and by Chandra \textit{et al}.~\cite{chandra1996electrical} allows us to calculate
the commute and hitting times based on the generalized resistance
distance. That is, the commute time between the corresponding nodes
is given by
\begin{equation}
\hat{\kappa}^{\tau}\left(i,j\right)=vol\left(G^{\tau}\right)\hat{\varOmega}^{\tau}\left(i,j\right),
\end{equation}
where $vol\left(G^{\tau}\right)$ is the sum of all the weights of
the edges of $G^{\tau}$ (see for instance, Ref.~\cite{ghosh2008minimizing}). 
Using the \textit{scaled generalized Fundamental Matrix} (SGFM) (details about the definition and notation for this matrix
can be found in Refs.~\cite{grinstead2012introduction,boley2011commute}) we can express
the hitting and commute times in matrix form as
\begin{equation}
\hat{H}^{\tau}=\mathbf{1}\left[diag\left(\tilde{Z^{\tau}}^{-1}\right)\right]^{T}-\tilde{Z^{\tau}},
\end{equation}
\begin{equation}
\hat{\kappa}^{\tau}=\mathbf{1}\left[diag\left(\tilde{Z^{\tau}}^{-1}\right)\right]^{T}+\left[diag\left(\tilde{Z^{\tau}}^{-1}\right)\right]\mathbf{1}^{T}-\tilde{Z^{\tau}}-\tilde{Z^{\tau}}^{T},
\end{equation}
where $\tilde{Z^{\tau}}$ is just the SGFM for the graph $G^{\tau}$. 
The expected commute time averaged over all pairs of nodes can be
easily obtained from the multi-hopper Kirchhoff index as
\begin{equation}
\left\langle \hat{\kappa}^{\tau}\right\rangle =\dfrac{4\left(\mathbf{1}^{T}\mathbf{w^{\tau}}\right)}{n\left(n-1\right)}\hat{\varOmega}^{\tau}_\mathrm{tot},
\end{equation}
where $\mathbf{w^{\tau}}$ is the vector containing the weight of
each edge in the graph $G^{\tau}$.
In a similar way we can obtain the expected hitting time averaged
over all pairs of nodes
\begin{equation}\label{eq3.14}
\left\langle \hat{H}^{\tau}\right\rangle =\dfrac{2\left(\mathbf{1}^{T}\mathbf{w^{\tau}}\right)}{n\left(n-1\right)}\hat{\varOmega}^{\tau}_\mathrm{tot}.
\end{equation}

In order to understand the mechanism behind the efficiency of the
multi-hopper random walker to explore networks, it is important to
relate the structure of the network with dynamical quantities. In
the following part we study the stationary probability distribution
and the mean-first return time and its relation with the distances
in the network. \\[2mm] The stationary probability distribution vector
$\boldsymbol{\pi}^{\tau}$ is obtained as follows:
\begin{equation}
\boldsymbol{\pi}^{\tau}=\frac{\hat{P}^{\tau}\mathbf{1}}{\mathbf{1}^{T}\hat{P}^{\tau}\mathbf{1}}.
\end{equation}
Now, having into account the definition of the matrix $\hat{P}^{\tau}$,
we obtain for the elements $\pi^{\tau}(i)$ with $i=1,2,\ldots,n$
of the stationary probability distribution: 
\begin{equation}
{\pi}^{\tau}(i)=\frac{k^{\tau}(i)}{\sum_{j=1}^{n}k^{\tau}(j)},\label{StatDistf}
\end{equation}
In this way, we obtain for the Mellin transformation with parameter
$s$: 
\begin{equation}
k^{\tau=\textnormal{Mel}}(i)=k\left(i\right)+k_{2}\left(i\right)\frac{1}{2^{s}}+k_{3}\left(i\right)\frac{1}{3^{s}}+\ldots+k_{d_\textrm{max}}\left(i\right)\frac{1}{d_\textrm{max}^{s}},\label{LRDegreeM}
\end{equation}
and the Laplace transformation with parameter $l$: 
\begin{equation}
k^{\tau=\textnormal{Lapl}}(i)=k\left(i\right)+k_{2}\left(i\right)\frac{1}{e^{2l}}+k_{3}\left(i\right)\frac{1}{e^{3l}}+\ldots+k_{d_\textrm{max}}\left(i\right)\frac{1}{e^{l d_\textrm{max}}}.\label{LRDegreeE}
\end{equation}
The stationary probability distribution in Eq.~(\ref{StatDistf})
determines the probability to find the random walker at the node $i$
in the limit $t$ large. The expressions~(\ref{LRDegreeM}) and~(\ref{LRDegreeE})
allow to identify how the structure of the networks and the long-range
strategy controlled by the parameters $s$ or $l$ combine in order
to change the stationary probability distribution. In the limit of
$s,l\to\infty$ the long-range contribution is null and the result
$\pi(i)=k_{i}/\sum_{l=1}^{n}k_{l}$ for the normal random
walker is recovered. On the other hand, when $s,l\to0$, the dynamics
includes, in the same proportion, contributions of first-, second-,
third-,..., and $d_\textrm{max}$-nearest neighbors. In this limit case the
stationary probability distribution is the same for all the nodes
and $\pi(i)=1/n$.

\section{On the hitting time in the random multi-hopper model}

The most important result of this work is related to the average hitting
time of the random multi-hopper walk when the parameters of the corresponding
transforms tends to zero.
\begin{lemma}
Let us consider the transformed $k$-path adjacency matrices:
\begin{equation}
\hat{A}^{\tau}=\sum_{d=1}^{d_\textrm{max}}c_{d}^{\tau}A_{d},\label{eq:generalized operator-1}
\end{equation}
with $c_{d}^{\tau=\textnormal{Mel}}=d^{-s}$ for $s>0$ and $c_{d}^{\tau=\textnormal{Lapl}}=\exp\left(-l\cdot d\right)$
for $l>0$. Then, when $s\rightarrow0$ or $l\rightarrow0$ the average
hitting time $\langle \hat{H}^{\tau}\rangle\rightarrow\left(n-1\right)$,
independently of the topology of the graph, which is the minimum for
any graph of $n$ nodes.
\end{lemma}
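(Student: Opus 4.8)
The plan is to show that as the transform parameter tends to zero the weighted graph $G^{\tau}$ degenerates to the complete graph $K_{n}$ with unit edge weights, to read off the average hitting time there, and then to prove that this value, $n-1$, is a universal lower bound by rewriting $\langle\hat{H}^{\tau}\rangle$ entirely in terms of the Laplacian spectrum. For the degeneration: in the Mellin case $c_{d}^{\tau}=d^{-s}\to 1$ as $s\to 0$, in the Laplace case $c_{d}^{\tau}=\exp(-l\cdot d)\to 1$ as $l\to 0$, and in both cases $c_{d}^{\tau}>0$ for every admissible parameter. Hence $\hat{A}^{\tau}\to\sum_{d=1}^{d_\textrm{max}}A_{d}$, and since each pair of distinct nodes $i\neq j$ satisfies $d_{ij}=d$ for exactly one $d\in\{1,\dots,d_\textrm{max}\}$ while $A_{d}$ has zero diagonal, this limit equals $\mathbf{1}\mathbf{1}^{T}-I$. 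So $G^{\tau}$ is, for every parameter value, a complete graph on $n$ nodes with strictly positive edge weights, and all those weights converge to $1$; that is, $G^{\tau}\to K_{n}$.

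Next I would rewrite $\langle\hat{H}^{\tau}\rangle$ through the spectrum of $\hat{L}^{\tau}$. Let $0=\mu_{1}^{\tau}<\mu_{2}^{\tau}\le\cdots\le\mu_{n}^{\tau}$ be the eigenvalues of $\hat{L}^{\tau}$ (the zero is simple since $G^{\tau}$ is connected) with orthonormal eigenbasis $\{v_{k}\}$, $v_{1}=\mathbf{1}/\sqrt{n}$. Because $\hat{A}^{\tau}$ has zero diagonal, $\mathbf{1}^{T}\mathbf{w}^{\tau}=\tfrac12\,\mathrm{tr}(\hat{K}^{\tau})=\tfrac12\,\mathrm{tr}(\hat{L}^{\tau})=\tfrac12\sum_{k=2}^{n}\mu_{k}^{\tau}$. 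Expanding the Moore--Penrose pseudo-inverse $\hat{\mathcal{L}}^{\tau}=\sum_{k=2}^{n}(\mu_{k}^{\tau})^{-1}v_{k}v_{k}^{T}$ and using $\sum_{i<j}\big((v_{k})_{i}-(v_{k})_{j}\big)^{2}=n\|v_{k}\|^{2}-(\mathbf{1}^{T}v_{k})^{2}=n$ for $k\ge 2$, one obtains $\hat{\varOmega}^{\tau}_\mathrm{tot}=n\sum_{k=2}^{n}(\mu_{k}^{\tau})^{-1}$. Substituting both identities into Eq.~(\ref{eq3.14}) gives
\begin{equation}
\langle\hat{H}^{\tau}\rangle=\frac{1}{n-1}\left(\sum_{k=2}^{n}\mu_{k}^{\tau}\right)\!\left(\sum_{k=2}^{n}\frac{1}{\mu_{k}^{\tau}}\right).
\end{equation}

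Both assertions then follow from this formula. For the limit, $\hat{L}^{\tau}\to nI-\mathbf{1}\mathbf{1}^{T}$ as $s\to 0$ (resp. $l\to 0$), whose $n-1$ nonzero eigenvalues are all equal to $n$; since the eigenvalues depend continuously on $\hat{L}^{\tau}$ and $G^{\tau}$ stays connected in the limit (so none of $\mu_{2}^{\tau},\dots,\mu_{n}^{\tau}$ collapses to $0$), the displayed formula yields $\langle\hat{H}^{\tau}\rangle\to\tfrac{1}{n-1}\big((n-1)n\big)\big(\tfrac{n-1}{n}\big)=n-1$. For minimality, the same formula applies verbatim to any connected weighted graph on $n$ nodes, and the Cauchy--Schwarz (equivalently arithmetic--harmonic mean) inequality gives $\big(\sum_{k=2}^{n}\mu_{k}\big)\big(\sum_{k=2}^{n}\mu_{k}^{-1}\big)\ge(n-1)^{2}$, so $\langle\hat{H}\rangle\ge n-1$ for every such graph, with equality iff $\mu_{2}=\cdots=\mu_{n}$, i.e. iff the graph is a uniformly weighted complete graph---precisely the limiting configuration. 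Hence $n-1$ is the minimum over all $n$-node graphs, and it is attained as $s,l\to 0$.

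I expect the only non-routine point to be the spectral identity $\hat{\varOmega}^{\tau}_\mathrm{tot}=n\sum_{k\ge 2}(\mu_{k}^{\tau})^{-1}$ together with the interchange of the limit $s,l\to 0$ and the pseudo-inversion. The identity is exactly the orthonormal-eigenbasis computation above, which hinges on $\mathbf{1}^{T}v_{k}=0$ for $k\ge 2$; the interchange reduces to continuity of the spectrum of $\hat{L}^{\tau}$ in the parameter plus the fact that $G^{\tau}$ remains connected (indeed complete) for all parameter values including $0$, so that $\ker\hat{L}^{\tau}$ stays one-dimensional and only $\mu_{1}^{\tau}=0$ is ever dropped from the sums. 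The remaining ingredients---the edge weights of the limiting $K_{n}$, the eigenvalues of $nI-\mathbf{1}\mathbf{1}^{T}$, and the final arithmetic---are routine.
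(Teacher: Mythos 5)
Your proposal is correct, and it diverges from the paper's argument in the part that carries the mathematical weight. The first step is the same in both: since $c_{d}^{\tau}\to 1$ as $s\to 0$ (Mellin) or $l\to 0$ (Laplace) and every pair of distinct nodes lies at exactly one distance $d$, one gets $\hat{A}^{\tau}\to\mathbf{1}\mathbf{1}^{T}-I$, the adjacency matrix of $K_{n}$. From there the paper simply invokes the cited result of Palacios that the complete graph realizes the extremal average hitting time $\left\langle H\left(K_{n}\right)\right\rangle=n-1$ and concludes, implicitly treating $\langle\hat{H}^{\tau}\rangle$ as continuous in the matrix entries so that the limit of the hitting times equals the hitting time of the limit graph. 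You instead derive the spectral identity $\langle\hat{H}^{\tau}\rangle=\tfrac{1}{n-1}\bigl(\sum_{k\geq 2}\mu_{k}^{\tau}\bigr)\bigl(\sum_{k\geq 2}(\mu_{k}^{\tau})^{-1}\bigr)$ from Eq.~(\ref{eq3.14}) via $\mathbf{1}^{T}\mathbf{w}^{\tau}=\tfrac12\operatorname{tr}\hat{L}^{\tau}$ and the standard Kirchhoff-index computation, evaluate the limit through continuity of the spectrum (with the kernel staying one-dimensional because $G^{\tau}$ is a positively weighted complete graph for every positive parameter), and obtain minimality from the Cauchy--Schwarz (arithmetic--harmonic mean) inequality, with equality exactly for uniformly weighted complete graphs; all these steps check out. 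What your route buys: it is self-contained rather than citation-based, it makes the limit-interchange explicit instead of tacit, it proves the lower bound $\langle\hat{H}\rangle\geq n-1$ for all connected weighted graphs (strictly more than the unweighted statement the paper leans on, and it incidentally repairs the inequality direction as written in the paper's proof, which appears with $\leq$ although the minimum is meant), and it gives the equality characterization. What the paper's route buys is brevity and a direct link to the known extremal-graph literature.
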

\begin{proof}
First, let $G_{n}$ be any connected, simple graphs with $n$ nodes.
Then, $\left\langle \hat{H}\left(G_{n}\right)\right\rangle \leq\left\langle \hat{H}\left(K_{n}\right)\right\rangle =n-1$,
with equality if and only if $G_{n}=K_{n}$, where $K_{n}$ is the
complete graph with $n$ nodes (see Ref.~\cite{palacios2001resistance}). 
Then, when $s\rightarrow0$ or $l\rightarrow0$
we have that $\hat{A}^{\textnormal{Mell}}\left(i,j\right)\rightarrow1$
and $\hat{A}^{\textnormal{Lapl}}\left(i,j\right)\rightarrow1$, respectively.
This means that $\hat{A}^{\tau}\left(i,j\right)=1$ $\forall i\neq j$
and $\hat{A}^{\tau}\left(i,j\right)=0$ $\forall i=j$. In other words,
$\hat{A}^{\tau}=\mathbf{1}\mathbf{1}^{T}-I$, which is the adjacency
matrix of the complete graph $K_{n}$. In closing, when $s\rightarrow0$
or $l\rightarrow0$, $\hat{A}^{\tau}\rightarrow A\left(K_{n}\right)$.
As it has been previously proved $\left\langle H\left(K_{n}\right)\right\rangle =n-1$~\cite{palacios2001resistance},
which proves the result.
\end{proof}

\subsection{Case study: hitting times of the multi-hopper on all 8-node graphs}
We study the average hitting time of all $11,117$ connected
graphs with 8 nodes. The average hitting time has mean $10.036\pm1.932$
for all the graphs with $n=8$, with a maximum value of $21.071$.
These values converge quickly to $n-1$ as soon as $s,l\rightarrow0$.
For instance for $s=0.5$ the mean of the average hitting time is
already $7.062\pm0.024$, and this value drops up to $7.00253\pm0.00096$
for $s=0.1$. The situation is very similar for $l\rightarrow0$,
and the mean of the average hitting time is $7.0037\pm0.0045$ for
$l=0.1$ and $7.000072\pm4.67\cdot10^{-5}$ for $l=0.01$. In all
cases the minimum value of the average hitting time is obtained for
the complete graph $K_{8}$. 

However, this is where the similarities
between the classical RW and the multi-hopper with the Mellin and
the Laplace transforms end. For instance, the graph with the maximum
value of $\left\langle H\right\rangle $ for the NRW is the lollipop
graph $L\left(8,4\right)$ illustrated in the left panel of Fig.~\ref{Maximal graphs}. 
However, for $\langle \hat{H}^{\textnormal{Mell}}\rangle $
with the values of $s$ studied here the maximum is reached for the
graph formed by a diamond graph with two paths connected to opposite
nodes as illustrated in the right panel of Fig.~\ref{Maximal graphs}. For
$\langle \hat{H}^{\textnormal{Lapl}}\rangle $ with the
values of $l$ studied here, the graph displaying the maximum is the
path graph $P_{8}$. We unfortunately do not have an analytic or otherwise
explanation for this phenomenon.
\begin{figure}
\begin{center}
\includegraphics[width=0.3\textwidth]{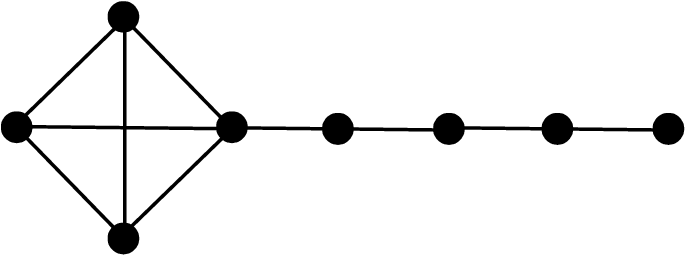}
\hspace{0.1\textwidth}
\includegraphics[width=0.3\textwidth]{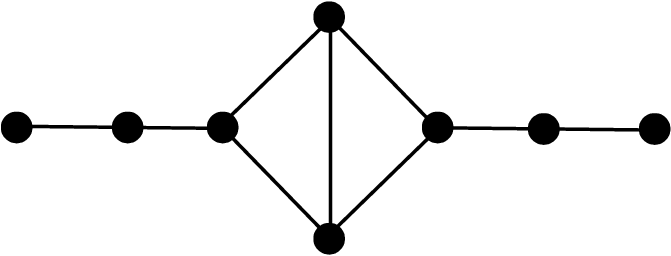}
\end{center}
\caption{Illustration of the lollipop graph $L\left(8,4\right)$ (left) and the graph
displaying the maximum value of $\langle \hat{H}^{\textnormal{Mell}}\rangle $ (right). }
\label{Maximal graphs}
\end{figure}

The lollipop graphs appear in many extremal properties related to
random walks on graphs. In 1990 Brightwell and Winkler~\cite{brightwell1990maximum}
proved that the hitting time between a pair of nodes $i$ and $j$
in a graph is maximum for the lollipop graph $L\left(n,\left\lfloor \tfrac{2n}{3}\right\rfloor \right)$
consisting of a clique of $\lfloor \frac{2n}{3}\rfloor $
nodes including $i$ to which a path on the remaining nodes, ending
in $j$, is attached. The same graph was found by Jonasson as the
one containing the pair of nodes maximizing the commute time among
all graphs~\cite{jonasson2000lollipop}.

Then, we investigate the average hitting time of the lollipops $L(n,\lfloor \tfrac{n}{2}\rfloor )$---the
one having the largest value of $\langle H\rangle $ for
the NRW among the graphs with $n=8$---and $L\left(n,\left\lfloor \tfrac{2n}{3}\right\rfloor \right)$,
as well as the graphs with the structure displayed in the right panel of Fig.~\ref{Maximal graphs}
and the paths $P_{n}$ for $8\leq n\leq1000$. Our investigation
produces the following results: (i) among the four types of graphs studied
the lollipop $L\left(n,\left\lfloor \tfrac{n}{2}\right\rfloor \right)$
has always the largest value of $\left\langle H\right\rangle $; (ii)
for $n\geq10$, the lollipop $L\left(n,\left\lfloor \tfrac{2n}{3}\right\rfloor \right)$
has always the largest value of $\langle \hat{H}^{\textnormal{Mell}}\rangle $
for $s=0.5$---the graph of the type illustrated in the right panel of Fig.~\ref{Maximal graphs}
is maximum for $n=8,9$; (iii) for $8\leq n\leq29$ the path
graph $P_{n}$ has the largest value of $\langle \hat{H}^{\textnormal{Lapl}}\rangle $
for $l=0.1$. For $n\geq30$ the lollipop $L\left(n,\left\lfloor \tfrac{n}{2}\right\rfloor \right)$
has the largest value of $\langle \hat{H}^{\textnormal{Lapl}}\rangle $
for $l=0.1$ among the four types of graphs investigated. 

On the basis of the previous results one may conjecture that the
graphs with the maximum value of the average hitting time correspond
to certain type of lollipop graph. This is intuitively justified
by the fact that in a lollipop the random walker spend a lot of time
hopping among the nodes of the clique and it has a little chance of
going through the path. However, on the same intuitively basis the
barbell graphs are also good candidates for having the maximum average
hitting time among graphs with $n$ nodes (see for instance Ref.~\cite{aldous2002reversible}).
A barbell graph $B\left(n,k_{1},k_{2}\right)$ is a graph with $n$
nodes and two cliques of sizes $k_{1}$ and $k_{2}$ connected by
a path on the remaining nodes. The problem that arises is to determine which
of the many lollipop or barbell graphs has the largest
average hitting time. This problem is out of the scope of the current
work, but constitutes an important open problem. We remark
that even for the case of the NRW with its many years of investigation,
it is still unknown which is the graph with the maximum average hitting
time. In case of the multi-hopper the problem is even more complicated as the
maximum value of the average hitting time may
depend on the value of the parameters $s$ and $l$ of the transform
used. We thus formulate the following open problem.
\begin{problem}
Determine the graph(s) with the maximum value of $\langle \hat{H}^{\tau}\rangle $
for different transforms of the multi-hopper random walk.
\end{problem}

\section{Deterministic graphs}

In this section we study some of the properties of the multi-hopper
model for some classes of graphs which have deterministic structure.

\subsection{Lollipop and barbell graphs}

The first classes of graphs that we study here are the so-called lollipop
and barbell graphs. These graphs appear in many extremal properties
related to random walks on graphs as we have discussed in the previous
section. Here we consider lollipop graphs $L\left(n,\left\lfloor \tfrac{n}{2}\right\rfloor \right)$
and $L\left(n,\left\lfloor 2\tfrac{n}{3}\right\rfloor \right),$which
have appeared already in the previous section and the symmetric barbell
graphs $B\left(n,\left\lfloor \tfrac{n}{3}\right\rfloor ,\left\lfloor \tfrac{n}{3}\right\rfloor \right)$.
We want to remark that these graphs are not necessarily the extremal
ones for the hitting time as discussed in the previous section but
they can be considered as representative of their classes. 

We observe that the three graphs display $\left\langle H\right\rangle \approx an^{3}$
for the normal random walk. The coefficients $a$ obtained by using
nonlinear fitting of the hitting times with $n$ are: $a\approx0.01387$
for $L\left(n,\left\lfloor \tfrac{n}{3}\right\rfloor \right)$, $a\approx0.0179$
for $B\left(n,\left\lfloor \tfrac{n}{3}\right\rfloor ,\left\lfloor \tfrac{n}{3}\right\rfloor \right)$
and $a\approx0.01928$ for $L\left(n,\left\lfloor \tfrac{2n}{3}\right\rfloor \right)$. 

We then study the variation of the parameters $s$ and $l$ in the
Mellin and Laplace transforms of the multi-hopper model for the three
graphs previously studied. In Fig.~\ref{Lollipops_Barbell} we
illustrate the results of these calculations. The important things
to remark in this point are the obvious differences between the use
of the Mellin and Laplace transforms in the multi-hopper model. First,
it is easily observed that the Mellin transform produces a faster
decay of the average hitting time than the Laplace one for the three
graphs. For instance the 50\% reduction in the average hitting time
of the three graphs occurs for values of $3.5<s<4.0$ for the Mellin
transform, but it happens for $1.5<l<2.0$ for the Laplace. This implies
that the Laplace transform converges to the average hitting time of $n-1$
at much smaller values than the Mellin transform. The second important
difference is observed in the insets of Fig.~\ref{Lollipops_Barbell}.
For the Laplace-transformed multi-hopper, the lollipop graph $L\left(n,\left\lfloor \tfrac{n}{2}\right\rfloor \right)$
always has the largest value of the average hitting time at any value
of $l$ among the three graphs studied. However, for the Mellin-transformed
case, the lollipop graph $L\left(n,\left\lfloor \tfrac{n}{2}\right\rfloor \right)$
has the largest hitting time for large values of $s$, but for $s\lesssim2.1$
the lollipop graph $L\left(n,\left\lfloor 2\tfrac{n}{3}\right\rfloor \right)$
is the one with the largest value of the average hitting time (see
the crossing in the inset of Fig.~\ref{Lollipops_Barbell}). This
confirms the complexity of the analysis of the extremal graphs for
the multi-hopper model as we have hinted in the previous section.
\begin{figure}
\includegraphics[width=0.45\textwidth]{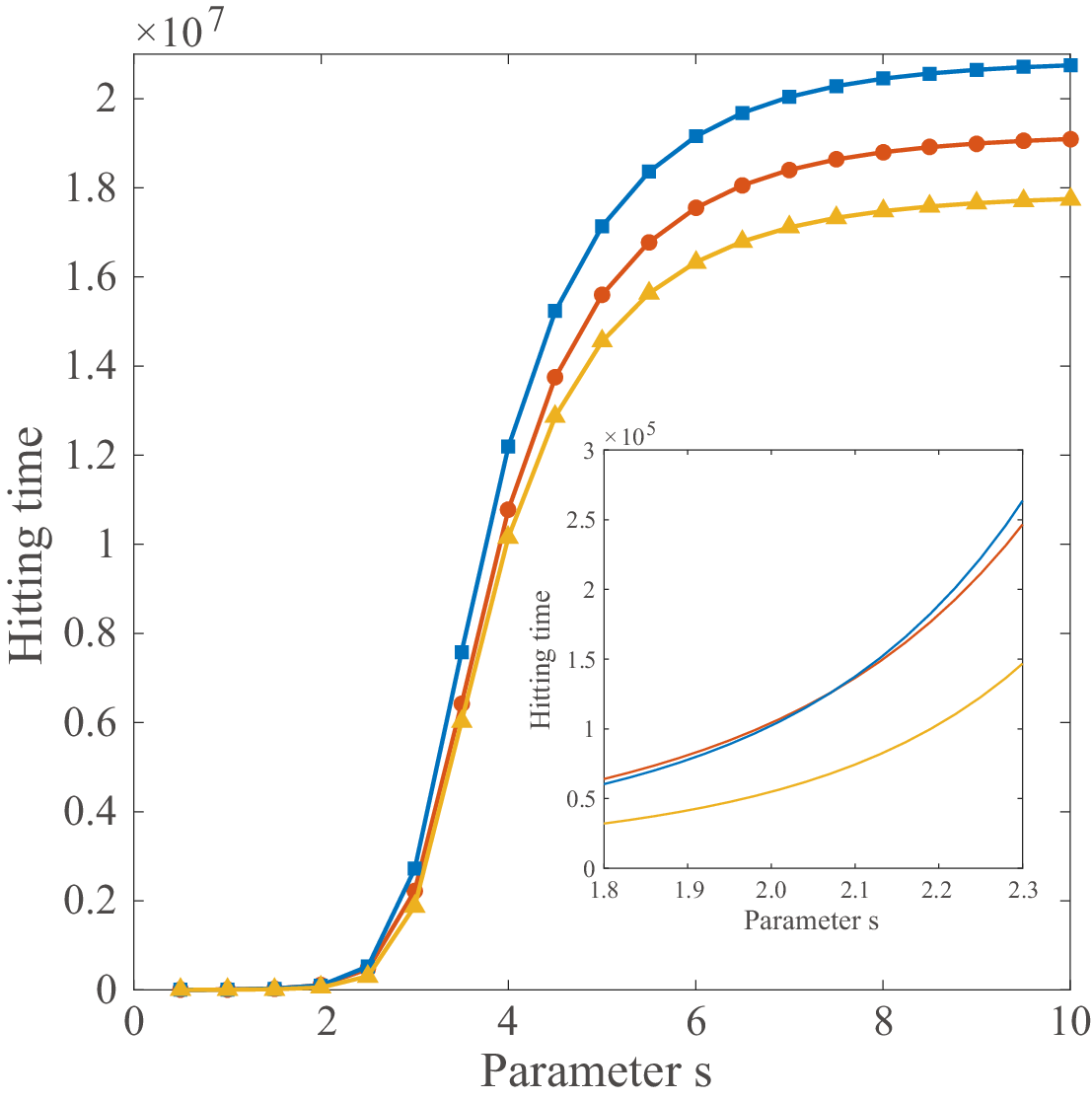}
\hfill
\includegraphics[width=0.45\textwidth]{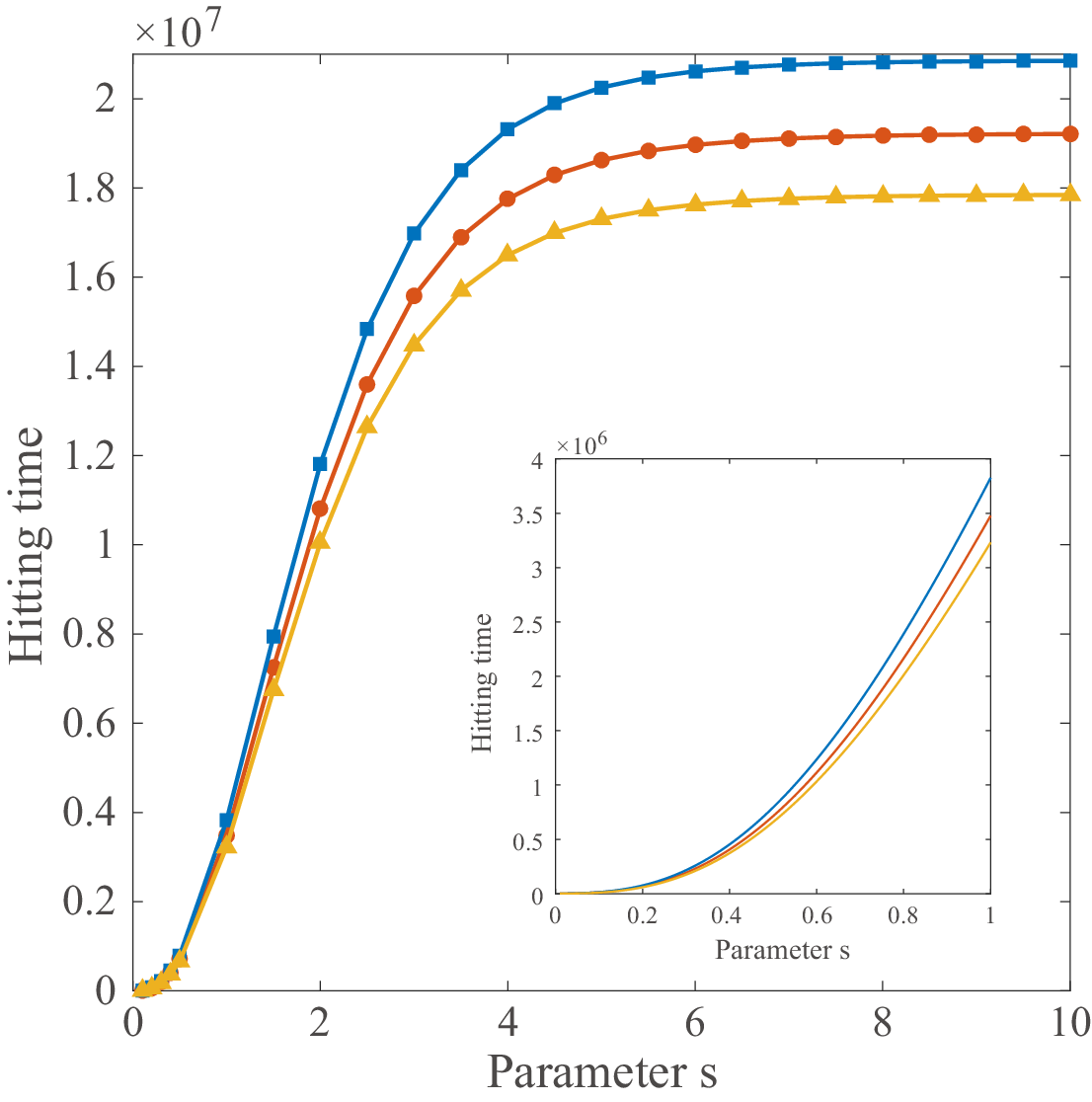}
\caption{Hitting time as a function of the parameter $s$ for the Mellin (left)
and the Laplace (right) transforms in lollipops $L\left(n,\left\lfloor \tfrac{n}{2}\right\rfloor \right)$
(blue squares), $L\left(n,\left\lfloor 2\tfrac{n}{3}\right\rfloor \right)$
(red circles) and barbell $B\left(n,\left\lfloor \tfrac{n}{3}\right\rfloor ,\left\lfloor \tfrac{n}{3}\right\rfloor \right)$
(yellow triangles) graphs for $n=999$. In the inset panels we zoom
the plot for the region $1.8\leq s\leq2.3$ and $0.01\leq l\leq1$,
respectively. }
\label{Lollipops_Barbell}
\end{figure}

We now concentrate on the variation of the average hitting time with
the number of nodes in the lollipop $L\left(n,\left\lfloor \tfrac{2n}{3}\right\rfloor \right)$
for different values of the parameters $s$ and $l$ (see Fig.~\ref{Lollipop_Size}).
For the Mellin transformed multi-hopper model with a fixed value of
the exponent $s$, the average hitting time can always be fit well by a power-law
of the number of nodes: $\langle \hat{H}^{\textnormal{Mel}}\left(s\right)\rangle \approx an^{\gamma}$,
where $\gamma\rightarrow3$ when $s\rightarrow\infty$ and $\gamma\rightarrow1$
when $s\rightarrow0$. For instance, $\gamma=2.831$ for $s=3;$ $\gamma=2.129$
for $s=2;$ $\gamma=1.772$ for $s=1;$ $\gamma=1.291$ for $s=0.5;$
$\gamma=1.011$ for $s=0.1$. The situation is quite similar for the
Laplace transformed multi-hopper model. This observation indicates
that for small values of the parameters $s$ and $l$ the average
hitting time changes linearly with the number of nodes. This important
observation is repeated for every family of graphs as we will see
in further sections of this work.
\begin{figure}
\includegraphics[width=0.45\textwidth]{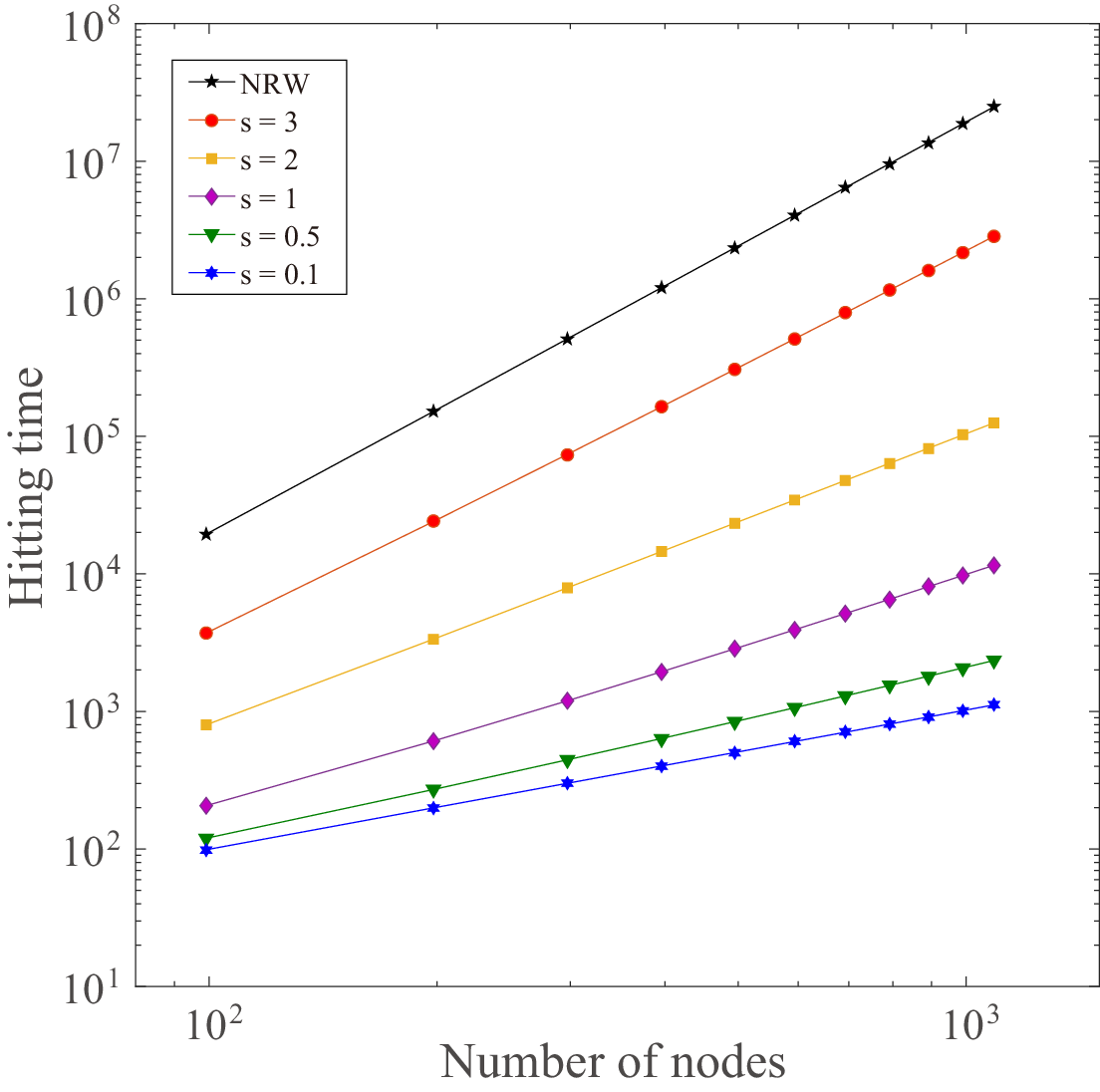}
\hfill
\includegraphics[width=0.45\textwidth]{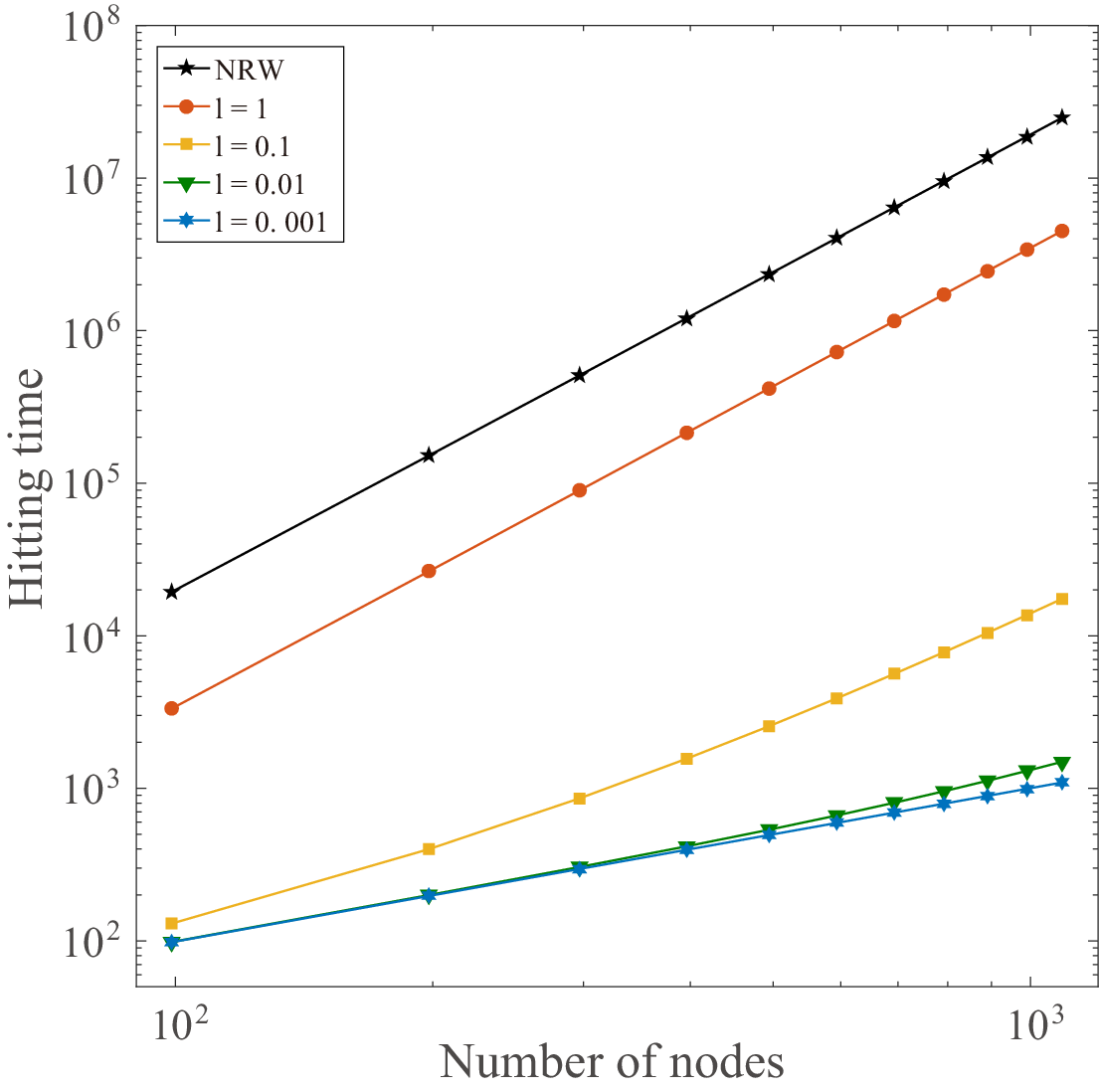}
\caption{Average hitting times for the random multi-hopper model with Mellin
(left) and Laplace (right) transforms in the lollipop graph $L\left(n,\left\lfloor \tfrac{2n}{3}\right\rfloor \right)$
as a function of the number of nodes $n$ in the graph. }
\label{Lollipop_Size}
\end{figure}

We then study the variation of the average hitting time with the number
of nodes for the lollipop $L\left(n,\left\lfloor \tfrac{2n}{3}\right\rfloor \right)$
for $0.001\leq s\leq0.05$ and obtain a linear dependence of the
type: $\langle \hat{H}^{\textnormal{Mell}}\left(G\right),s\rangle \approx\alpha n+\beta$. 

Using these linear fits we can estimate the critical number of nodes
$n_{c}$ below which $\left(n-1\right)\leq\langle \hat{H}^{\textnormal{Mell}}\left(L\left(n,\left\lfloor \tfrac{2n}{3}\right\rfloor \right)\right),s\rangle \leq n$
for a given value of the parameter $s$. Clearly, $n_{c}\leq\beta/(1-\alpha)$.
However, we can simplify this expression by observing that $\beta<-1$
and that $\alpha\approx1+2.751s^{2}$. Then, 
\begin{equation}
n_{c}\leq\dfrac{1}{2.751s^{2}},
\quad
s\leq0.05.
\end{equation}

The values of the critical number of nodes 
range from $145$ for $s=0.05$ to $363,504$ for $s=0.01.$ This
means, for instance, that any lollipop $L\left(n,\left\lfloor \tfrac{2n}{3}\right\rfloor \right)$
having less than 58,160 nodes will display average hitting time bounded
between $n-1$ and $n$ in the random multi-hopper model with a Mellin
transform and parameter $s\leq0.0025$. The previous inequality can
also be used in the other way around, namely in order to estimate
what is the value of $s$ that should be used such that a lollipop
$L\left(n,\left\lfloor \tfrac{2n}{3}\right\rfloor \right)$ has average
hitting time bounded as $\left(n-1\right)\leq\langle \hat{H}^{\textnormal{Mell}}\left(L\left(n,\left\lfloor \tfrac{2n}{3}\right\rfloor \right)\right),s\rangle \leq n$.
For instance, if we would like to know the value of $s$ for which
any graph with less than 100,000 nodes has hitting time below $n-1$,
we use 
\begin{equation}
s\geq\dfrac{1}{\sqrt{2.751n_{c}}},
\quad
s\leq0.05,\label{eq:s_bound}
\end{equation}
and obtain $s\approx0.0019$. We venture out here and make some
extrapolations to roughly estimate the value of $s$ for which any
lollipop $L\left(n,\left\lfloor \tfrac{2n}{3}\right\rfloor \right)$
with less than 1 million nodes has hitting time bounded as before.
This estimation gives a value of $s\lesssim0.0006$. 

The importance of the previous investigation is the following. Currently
we do not know what are the graphs with the largest value of the average
hitting time among all the graphs with $n$ nodes. However, we have
evidence that it should be either a lollipop
or a barbell graph. For these graphs the average hitting time is of
the order $n^{3}$ for the NRW. Then, we can use the previous values
obtained for the lollipop $L\left(n,\left\lfloor \tfrac{2n}{3}\right\rfloor \right)$
as rough indications of the worse case scenarios that can be expected
for any graph. In other words, if we consider a graph of any structure
having 1 million nodes we should expect that its average hitting time
is bounded below $n$ for $s\lesssim 0.0006$. We will see that for
the case of real-world networks, these values of $s$ are orders of
magnitude over-estimated in relation
to this upper bound. A first flavor of these differences is obtained
by the analysis of random graphs in the next section of this work.

\subsubsection{Time evolution}

In this section we are not interested in a detailed description of
the time evolution of the random walker or the multi-hopper in the
lollipop or barbell graphs. We rather will make a comparison between
the evolution of them at different times in such a way that we remark
the main difference between the two models. Consequently we consider
a lollipop $L\left(n,\left\lfloor \tfrac{2n}{3}\right\rfloor \right)$
and a barbell $B\left(n,\left\lfloor \tfrac{n}{3}\right\rfloor ,\left\lfloor \tfrac{n}{3}\right\rfloor \right)$
graph, both with $n=999$ nodes. In both cases we place the random
walker at a node of one of the two cliques. This node is selected
not to be the one attached to the path. Let any node of a clique in
the lollipop (respectively, a clique in the barbell) which is not
the one connected to the path be designated as the node $i$. Let
the endpoint of the path be named $j$. We remark that the node $j$
does not belong to the clique. Let the node connecting the clique
to which $i$ belongs and the path be designated as $k$. Then, we
have placed the random walker and the multi-hopper at the node $i$
of the lollipop and the barbell and explore the probability at each
node after different times using
\begin{equation}
\mathbf{p}_{t}=\left(\tilde{P}^{T}\right)^{t}\mathbf{p}_{0},
\end{equation}
where $\tilde{P}^{T}$ is the transpose of $\hat{P}^{\textrm{Mel}}.$

As can be seen in Fig.~\ref{Barbell_time evolution-1} the
classical random walker spends most of its time in the clique of the
graphs, taking on average $\left\lfloor \tfrac{2n}{3}\right\rfloor -1$
steps to visit the node $k$ in the lollipop and $\left\lfloor \tfrac{n}{3}\right\rfloor -1$
to visit it in the barbell. Once the walker visits the node $k$ she
can walk to the node $j$ only with probability $1/\left\lfloor \tfrac{2n}{3}\right\rfloor $
in the lollipop and $1/\left\lfloor \tfrac{n}{3}\right\rfloor $ in
the barbell graph. Then it can be seen in  Fig.~\ref{Barbell_time evolution-1}
that for times as large as $t=10^{6}$ the random walker is still
stacked in the clique of the lollipop graph. In the case of the barbell
when $t=10^{3}$ the walker has visited only the nodes of the clique
in which she started and when $t=10^{6}$ she starts to visit the
nodes of the other clique.

\begin{figure}
\includegraphics[width=0.32\textwidth]{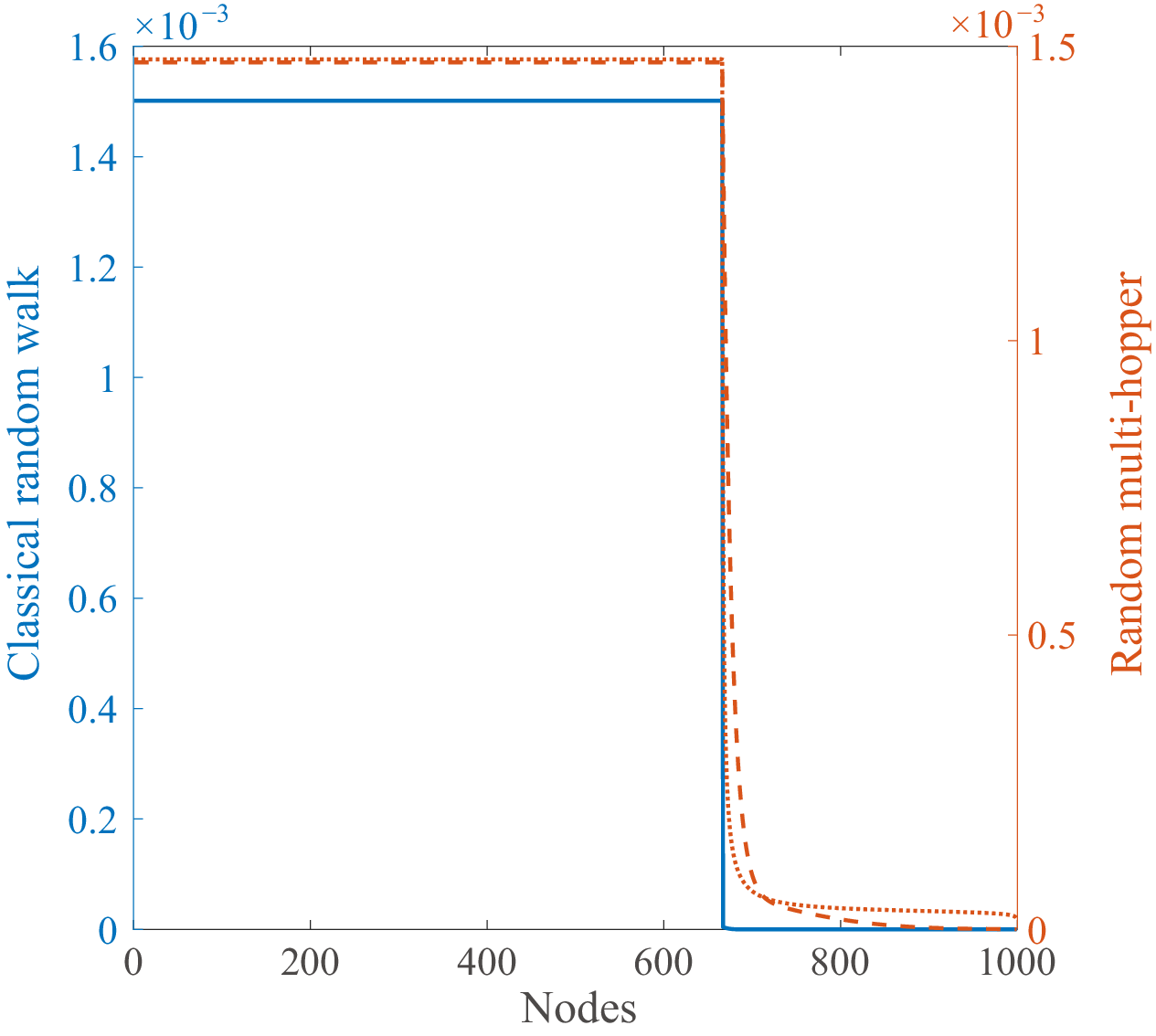}
\hfill
\includegraphics[width=0.32\textwidth]{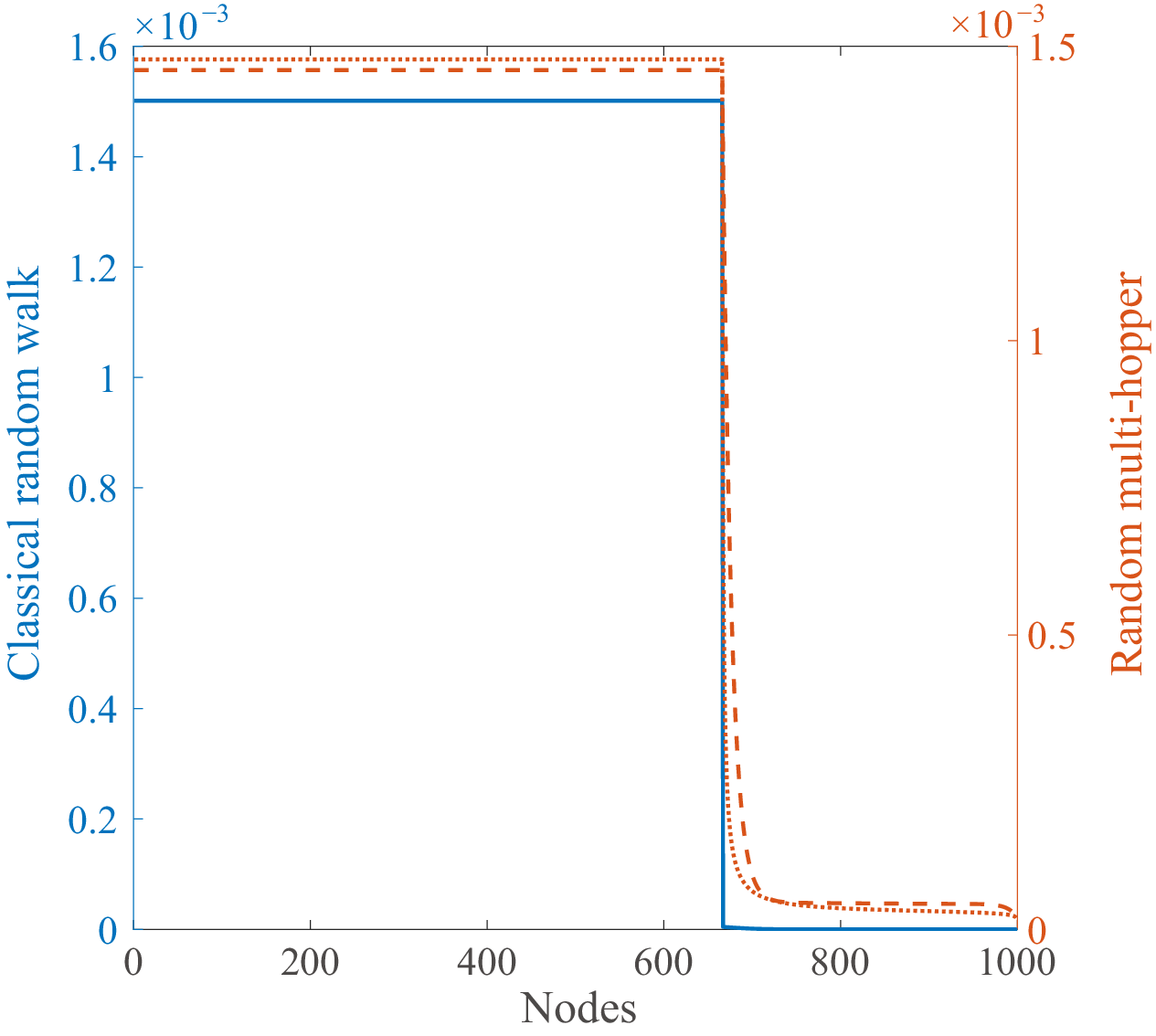}
\hfill
\includegraphics[width=0.32\textwidth]{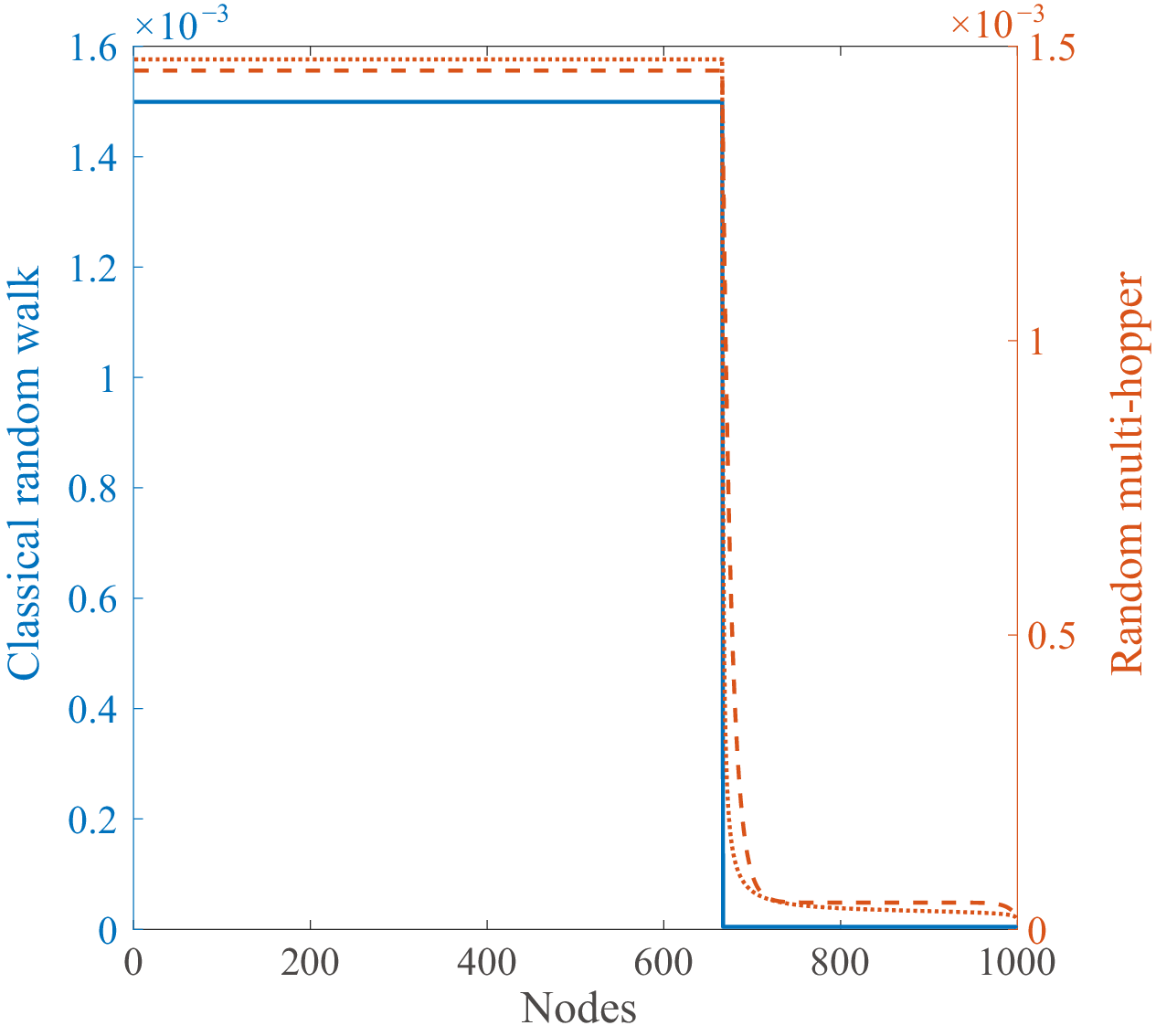}
\vspace{\baselineskip}
\\
\includegraphics[width=0.32\textwidth]{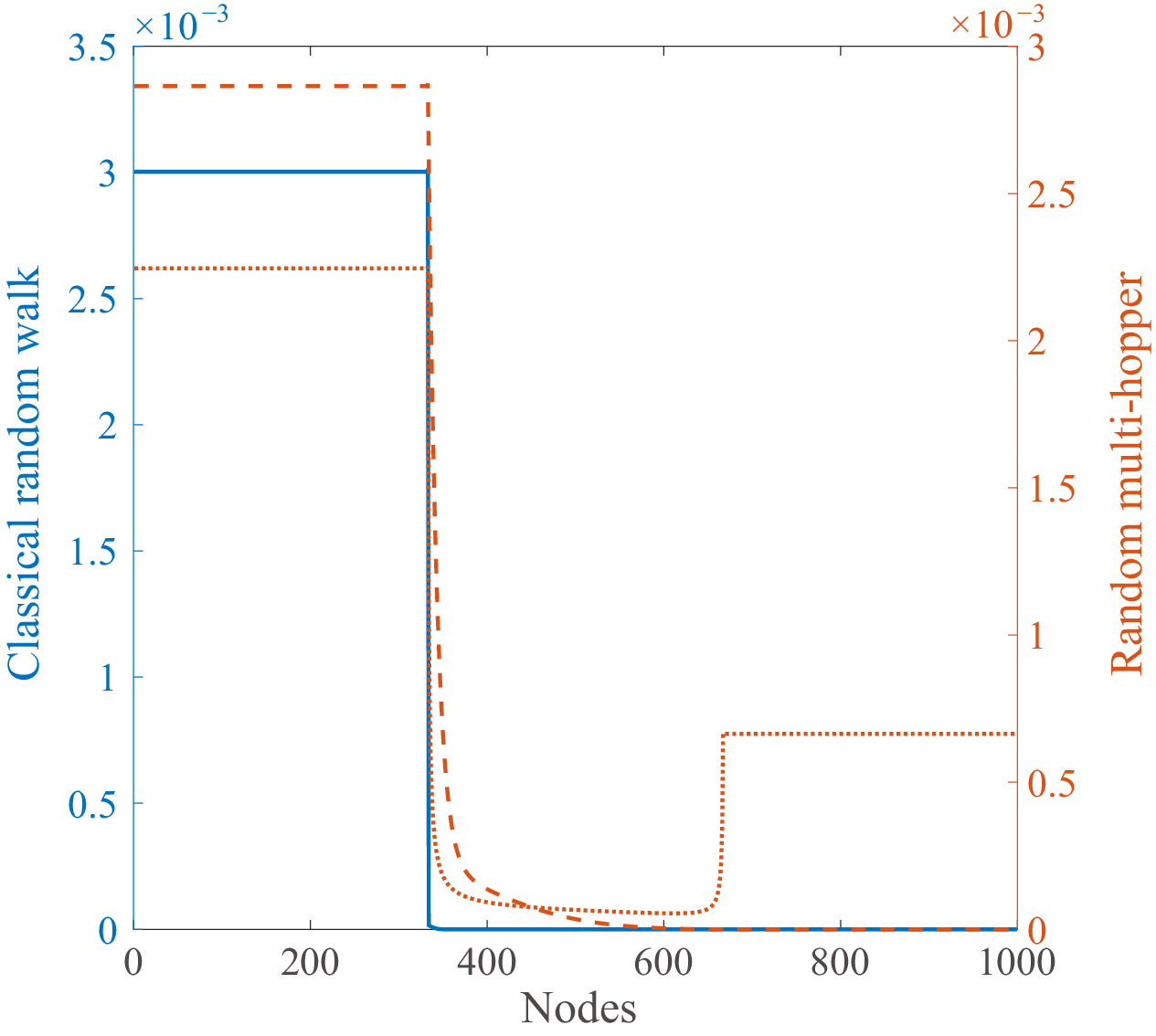}
\hfill
\includegraphics[width=0.32\textwidth]{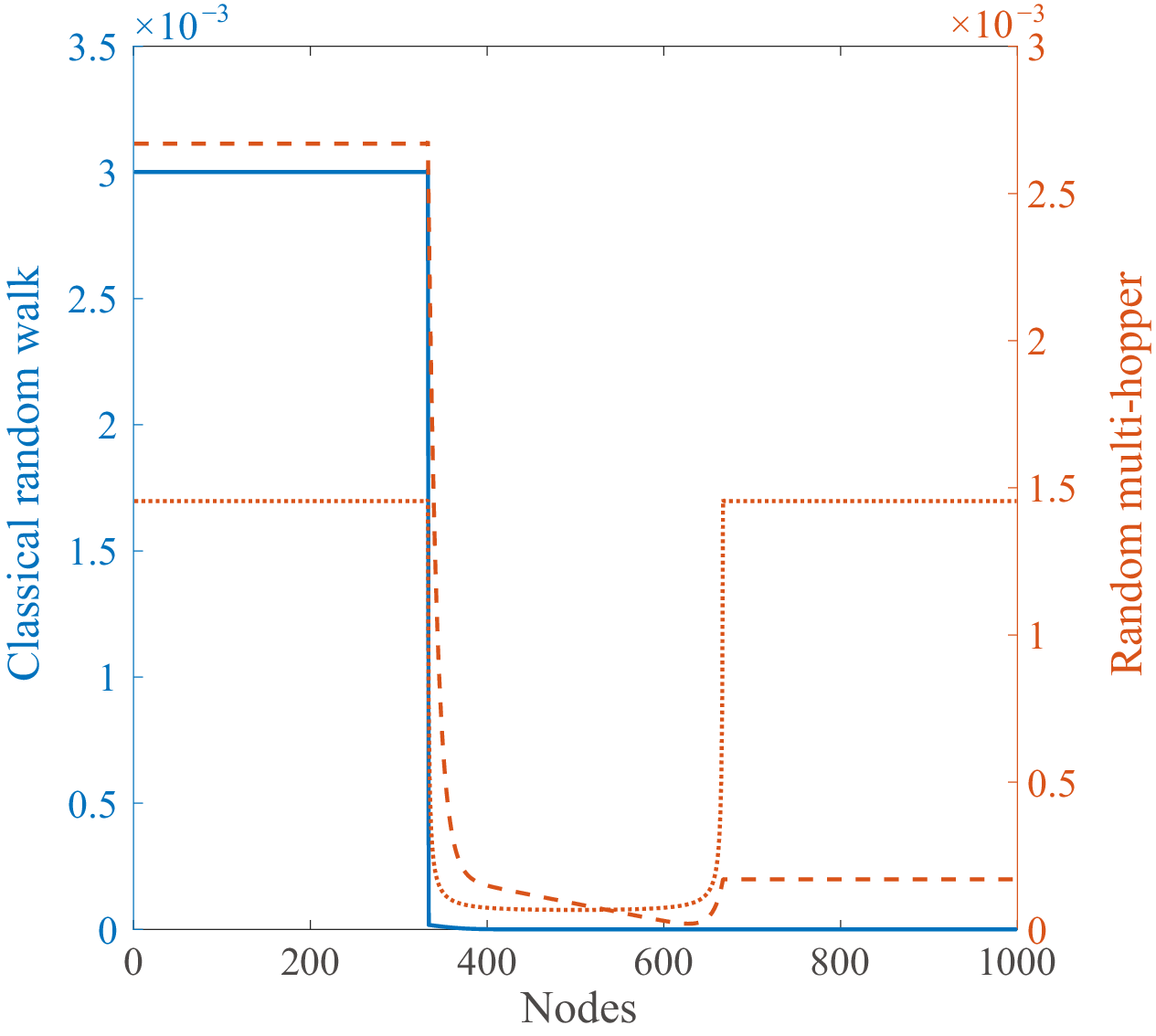}
\hfill
\includegraphics[width=0.32\textwidth]{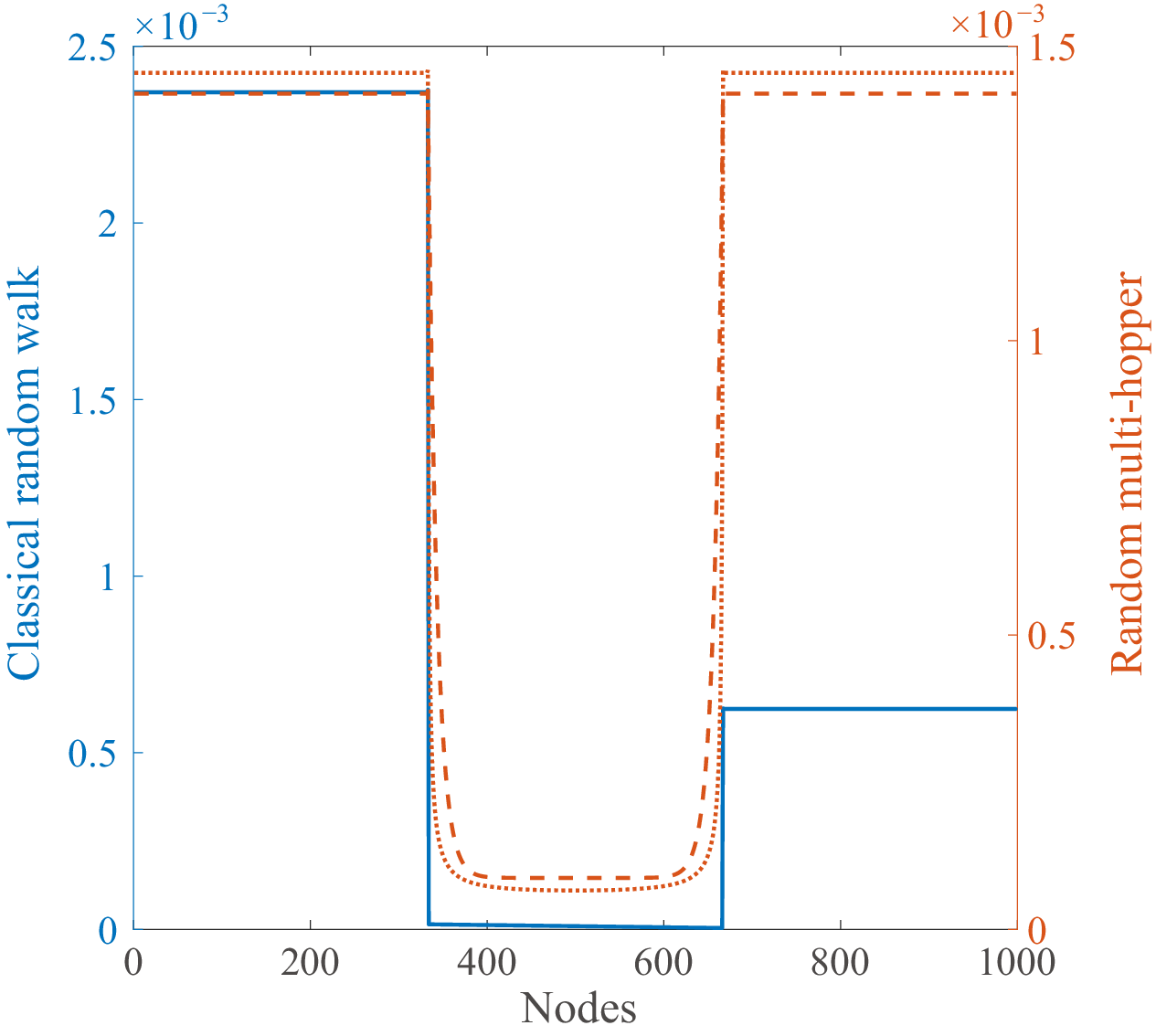}
\caption{Probability distribution at the different nodes of a lollipop $L\left(n,\left\lfloor \tfrac{2n}{3}\right\rfloor \right)$
(top) and barbell $B\left(n,\left\lfloor \tfrac{n}{3}\right\rfloor ,\left\lfloor \tfrac{n}{3}\right\rfloor \right)$
(bottom) graph with $n=999$ nodes. Classical random walk (blue solid
line) and the multi-hopper using the Mellin transform with $s=1$
(red broken line) and using the Laplace transform with $l=0.1$ (red
dotted line). The evolution of the probabilities is shown at three
different times for a random walk starting at node $i$ (see text)
at $t=50$ (left), $t=1000$ (center) and $t=10^{6}$ (right).}

\label{Barbell_time evolution-1}
\end{figure}

On the other hand, the random multi-hopper has a non-zero probability
of escaping directly from the clique at very short times. As can be seen in the right
panels of Fig.~\ref{Barbell_time evolution-1} even for the
small time $t=50$ the multi-hopper with Mellin transform has already
visited all the nodes of the graphs. For this short time, however,
the multi-hopper with Laplace transform has visited all the nodes
of the cliques plus the initial nodes of the path, but she has not
arrived yet at the node $j$. For time $t=1000$ the multi-hopper
with Mellin transform is already in the stationary state and the one
with Laplace transform has already visited all the nodes of the graphs.
At $t=10^{6}$ the multi-hopper has reached the stationary state for
both transforms. This significant difference with the classical RW
is due to the fact that the random multi-hopper is not trapped in
the cliques due to the fact that she can go directly from $i$ to
any node of the graph with a probability that decays as a function
of the distance from $i$. Then, the first few nodes of the path are
frequently visited by the multi-hopper as they are at relatively short
distances from the node $i$. Once, on these nodes the multi-hopper
can visit the most extreme nodes of the graphs in an easier way overtaking
the classical RW even at relatively short times. The way in which
a random multi-hopper is propagated through a path is analyzed in
the next subsection of this work.

\subsection{Path graphs}

Another interesting graph to consider is the path $P_{n}$. A path
$P_{n}$ is the graph having $n$ nodes all of degree 2 but two which
are of degree 1. As proved by Palacios~\cite{palacios2001resistance}
this graph has the maximum Kirchhoff index. For the normal RW, Palacios
proved that $\varOmega_\mathrm{tot}\sim n^{3}$. As the number of
edges in $P_{n}$ is $n-1$ one can easily see that $\left\langle H\left(P_{n}\right)\right\rangle \sim n^{2}$.
In Fig.~\ref{time evolution} we illustrate the evolution
of the probabilities of being at a given node of the path of 1000
nodes labelled in consecutive order from 1 to 1000, in which we have
placed the random walker at the node 1. As can be seen in the left panel of 
Fig.~\ref{time evolution} for $t=500$, the classical random walker
has visited only the first 100 nodes of the path while the random
multi-hoppers for both transforms has already visited all the nodes.
As the time increases, the random multi-hopper model gives almost
identical probabilities of finding the walker at any node of the path,
but the classical random walker still shows close to zero probability
of finding the walker at the other side of the path for times as large
as $t=5000$ (see the right panel of Fig.~\ref{time evolution}). 

\begin{figure}
\includegraphics[width=0.32\textwidth]{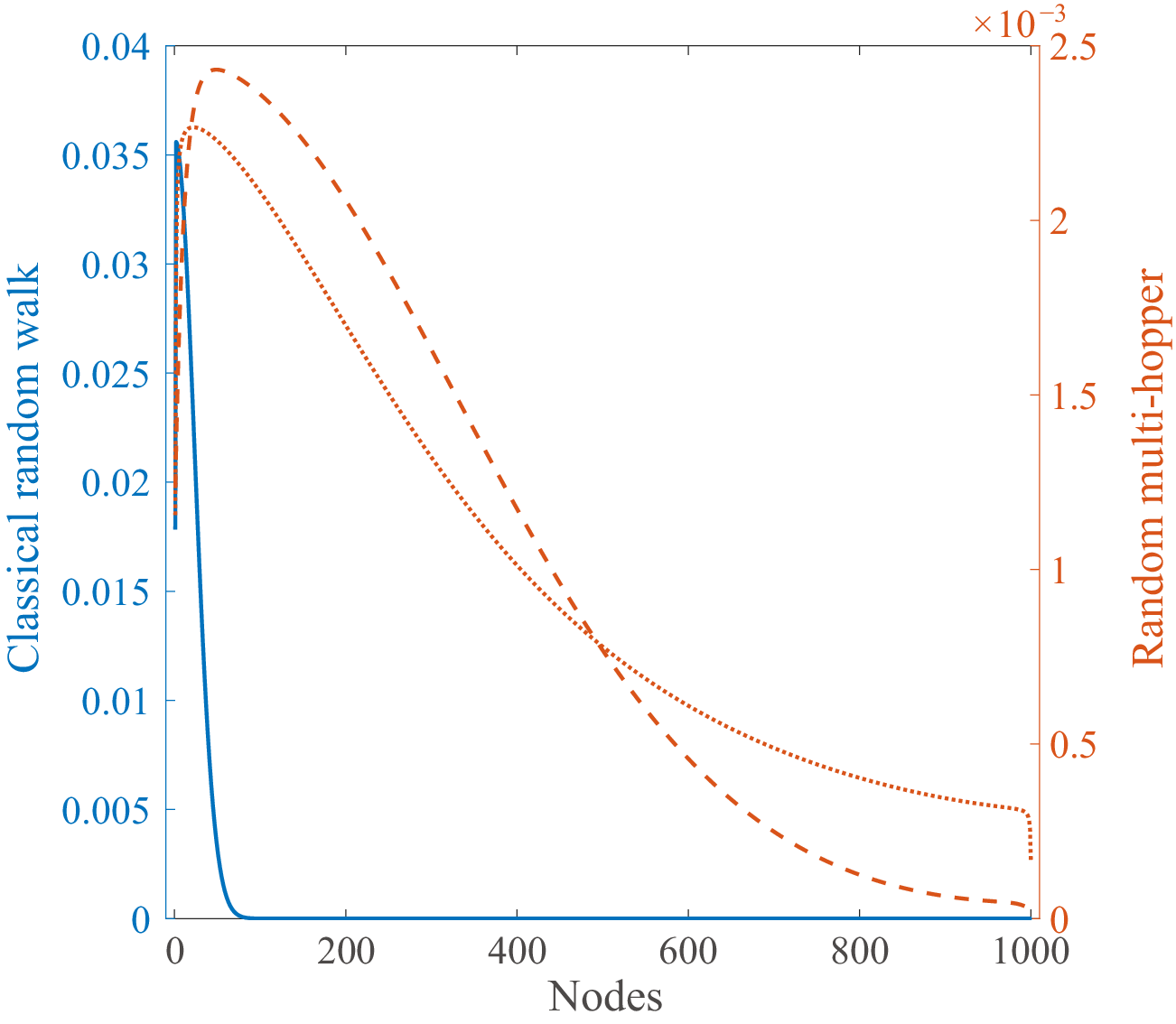}
\hfill
\includegraphics[width=0.32\textwidth]{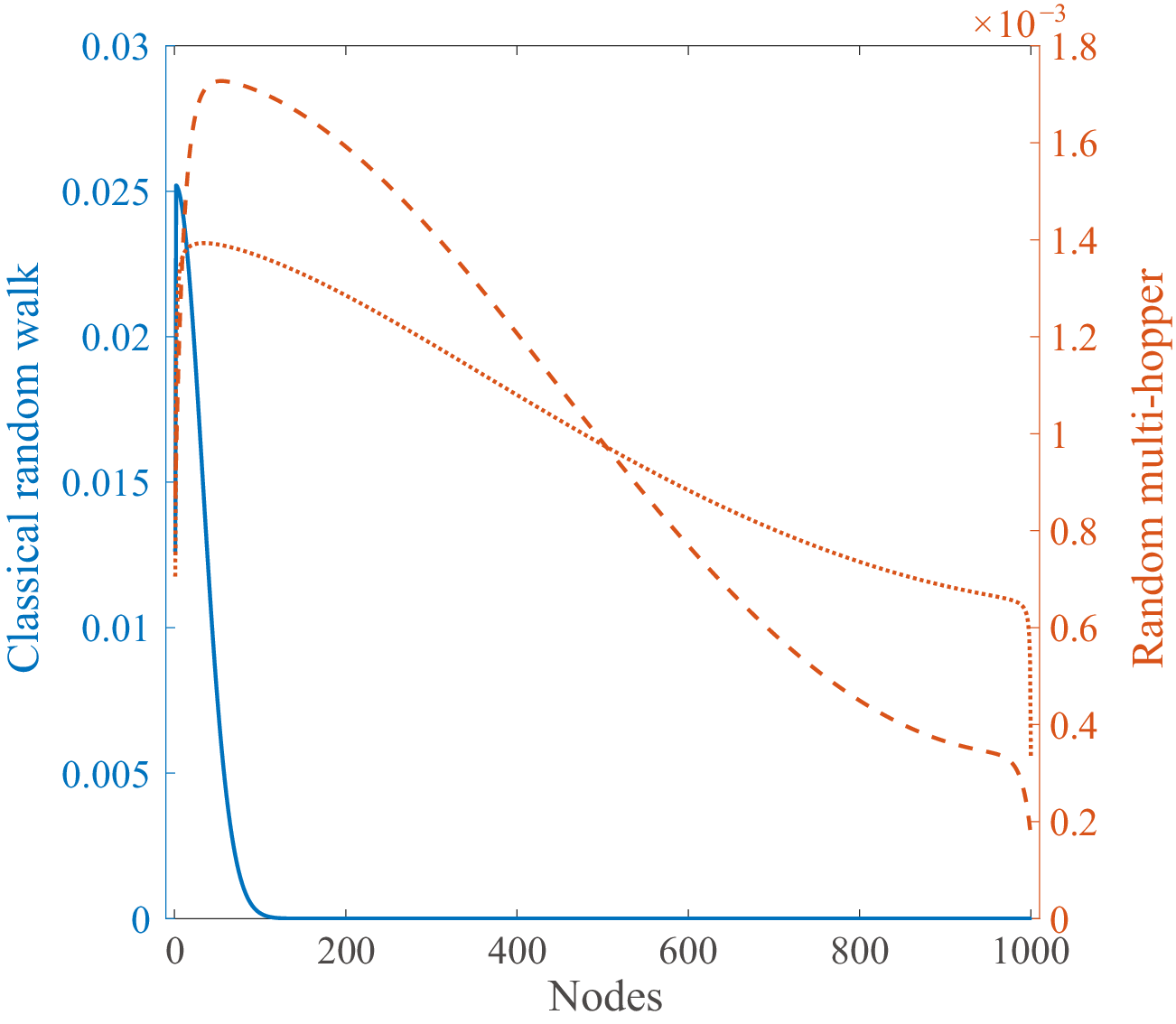}
\hfill
\includegraphics[width=0.32\textwidth]{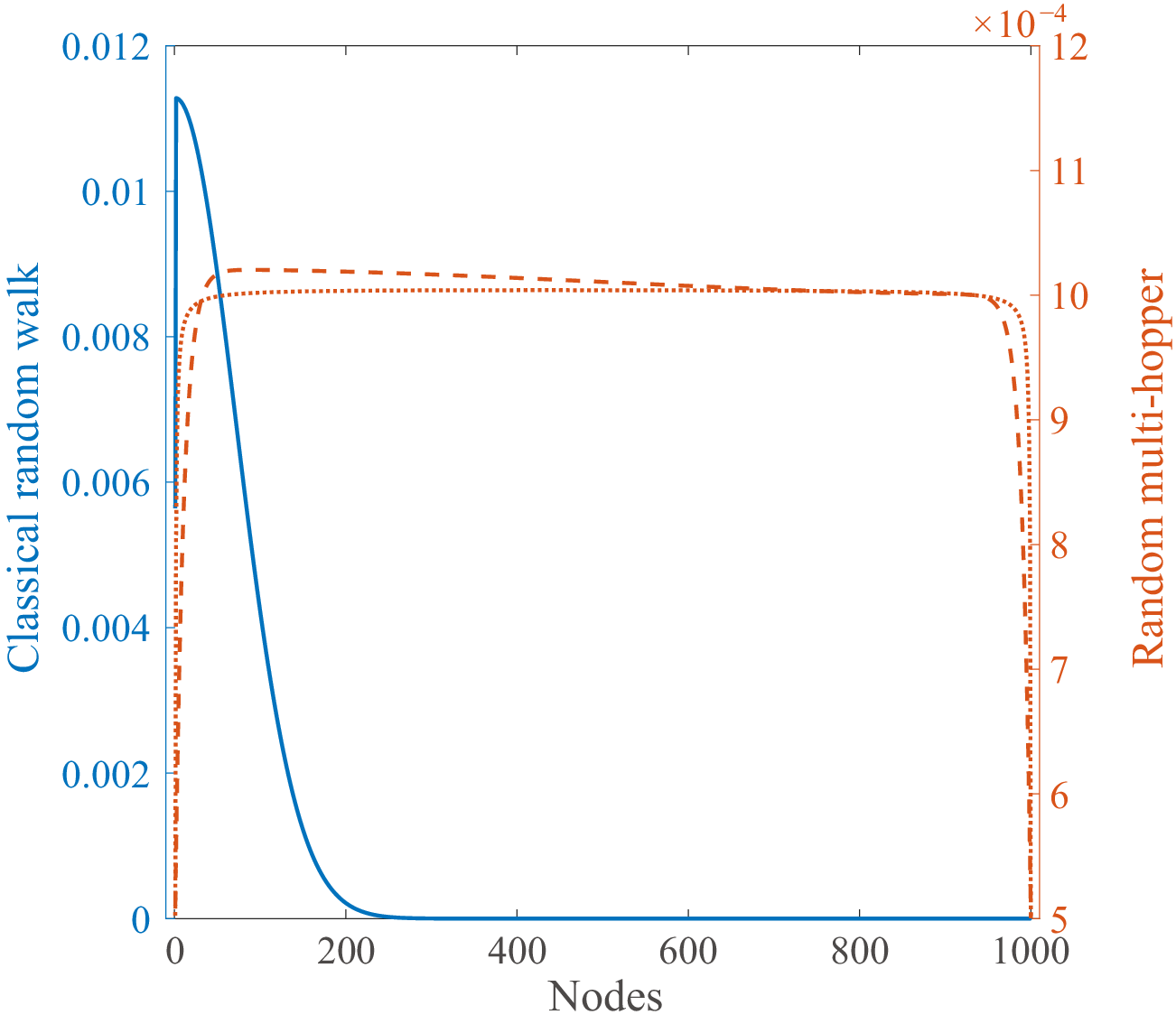}
\caption{Probabilities of finding the random walker at a given node of $P_{1000}$
at $t=500$ (left), $t=1000$ (center), and $t=5000$ (right) for
the classical (blue solid line) and multi-hopper random walk model
with Mellin transform with parameter $s=2$ and with the Laplace transform
with parameter $l=0.1$.}
\label{time evolution}
\end{figure}

As in the previous subsection we study here the influence of the graph
size over the hitting time in paths for both the Mellin and Laplace
transforms. In particular, we compare both transformations in the
multi-hopper random walk with the classical one for the path graph
with $100\leq n\leq1000$. As expected the average hitting time in
the classical random walk follows a quadratic dependence with the
number of nodes, $\langle H\rangle \approx0.3333n^{2}$.
However, for the multi-hopper one it follows power-laws with exponent
smaller than 2. For instance, for the dependence is of the form $\langle H^\tau\rangle \approx an^{b}$
with $b<2$.

The most interesting thing here is that as for the barbell and lollipop
graphs the average hitting time of paths also increases linearly with
the number of nodes for relatively small values of the parameters
$s$ and $l$. 

\subsection{Some remarks}

It is intuitive to think that the average shortest-path distance plays
a fundamental role in explaining the average hitting time of graphs
in the normal RW model. Then, because we allow for long-range jumps
in the multi-hopper model we would intuitively expect that such influence
of the shortest-path distance is diminished in this model. However,
one important thing that we have learn from the analysis of the lollipop,
barbell and path graphs is the following. Although the average shortest
path distance plays some role in the determination of the average
hitting times, it is the existence of large, relatively isolated,
clusters which plays the major role. That is, although in a path graph
we can have the largest possible average shortest path distance of
any graph with $n$ nodes---it has average path length equal to $\frac{n+1}{3}$---they
display average hitting time one order of magnitude smaller than the
lollipop and barbell graphs, which may have relatively small average
shortest path distances---particularly for the ones analyzed in this
section. This role of large clusters in graphs, which we discussed
in this section, is a great importance for the analysis of real-world
networks. Although these networks have relatively small average shortest-path
distance due to their small-world properties, they contain many communities---clusters
of tightly connected nodes, which are poorly connected among them---which
resemble the extremal situation of barbell and lollipop graphs.

\section{Random graphs}

In this section we explore the multi-hopper model for two types of
random networks: Barab\'{a}si-Albert (BA)~\cite{barabasi1999emergence}
and Erd\H{o}s-R\'{e}nyi (ER)~\cite{erdos1959random} types. We use the
exact result obtained for the expected hitting time averaged over
all pairs of nodes $\langle \hat{H}^{\tau}\rangle $ in Eq.~(\ref{eq3.14}).

The analysis of the Barab\'{a}si-Albert (BA) and Erd\H{o}s-R\'{e}nyi (ER)
random networks shows that the hitting time increases linearly with
the number of nodes in the graph (data not shown). In all cases
the use of the Mellin transform in the multi-hopper drops the slope
of the lines $\langle \hat{H}^{\tau}\rangle \approx an+b$
expressing the dependence
of the hitting time with the number of nodes. Thus, we found that
in general, using the long-range strategies, the resulting random
walker reaches more efficiently any site on the network in comparison
with the normal random walk. 

In Fig.~\ref{Figure_5_2} we fixed the number of nodes to $n=2000$
in ER and BA networks and we calculate the average hitting time as
a function of the parameters $l$ and $s$ for the Laplace and Mellin
transform, respectively. The results confirm previous findings~\cite{riascos2012long}
that in the limit for $l,s\to0$ the hitting times reach the value
$n-1$. However, for parameters in the interval $(0,10)$, we see
how the two types of strategies present a strong variation in the
average hitting time  $\langle \hat{H}^{\tau}\rangle $.
This is a direct consequence of how the random walk strategies assign
weights to small, intermediate and large steps. Finally, for large
values of the parameters, long-range transitions appear with low probability
and the values of $\langle \hat{H}^{\tau}\rangle $ are
equal to the results for the normal random walk strategy with transitions
only to nearest neighbors. These results are of great significance
for the further analysis of real-world networks in the next section
of this work.
\begin{figure}
\includegraphics[width=0.45\textwidth]{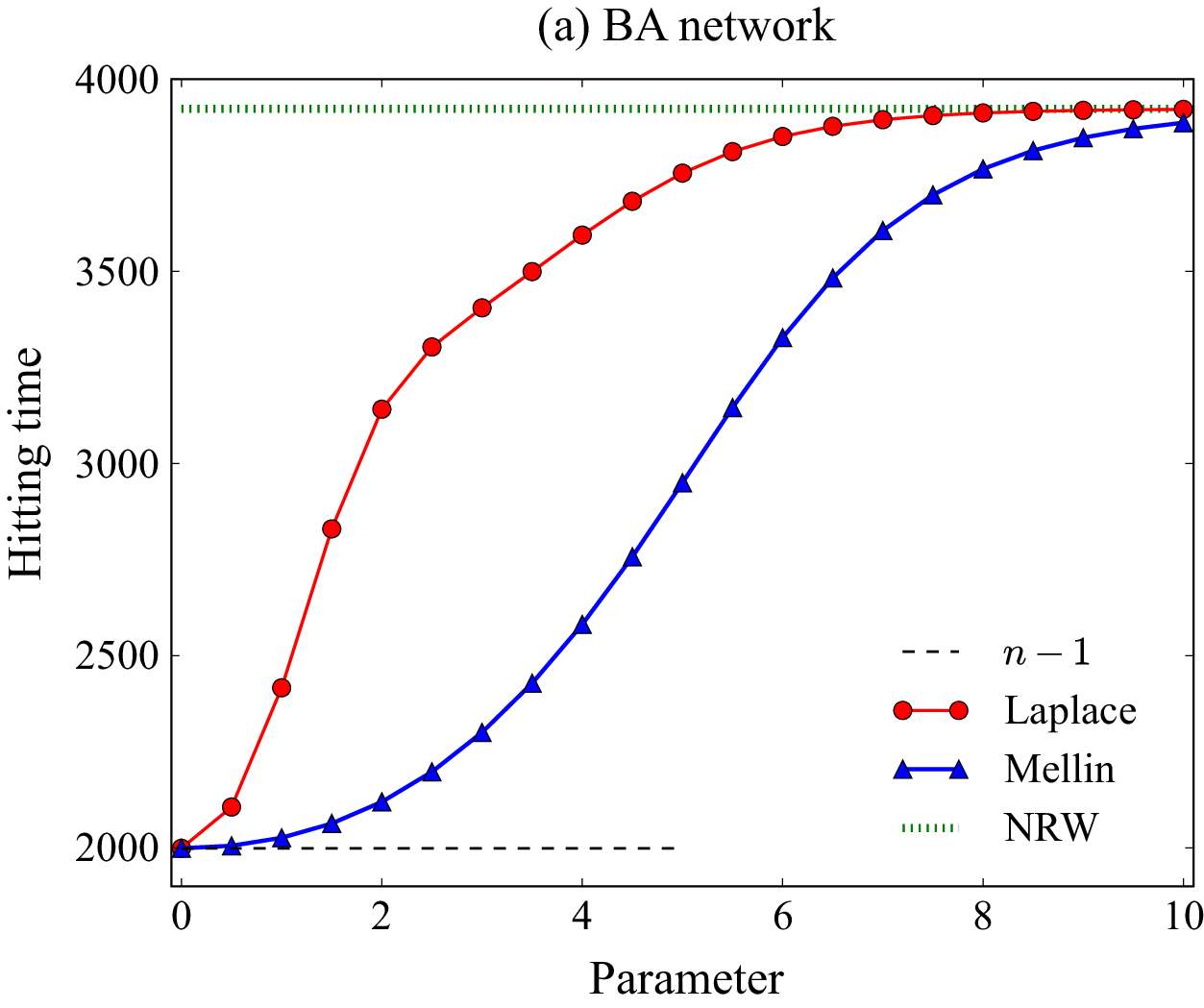}
\hfill
\includegraphics[width=0.45\textwidth]{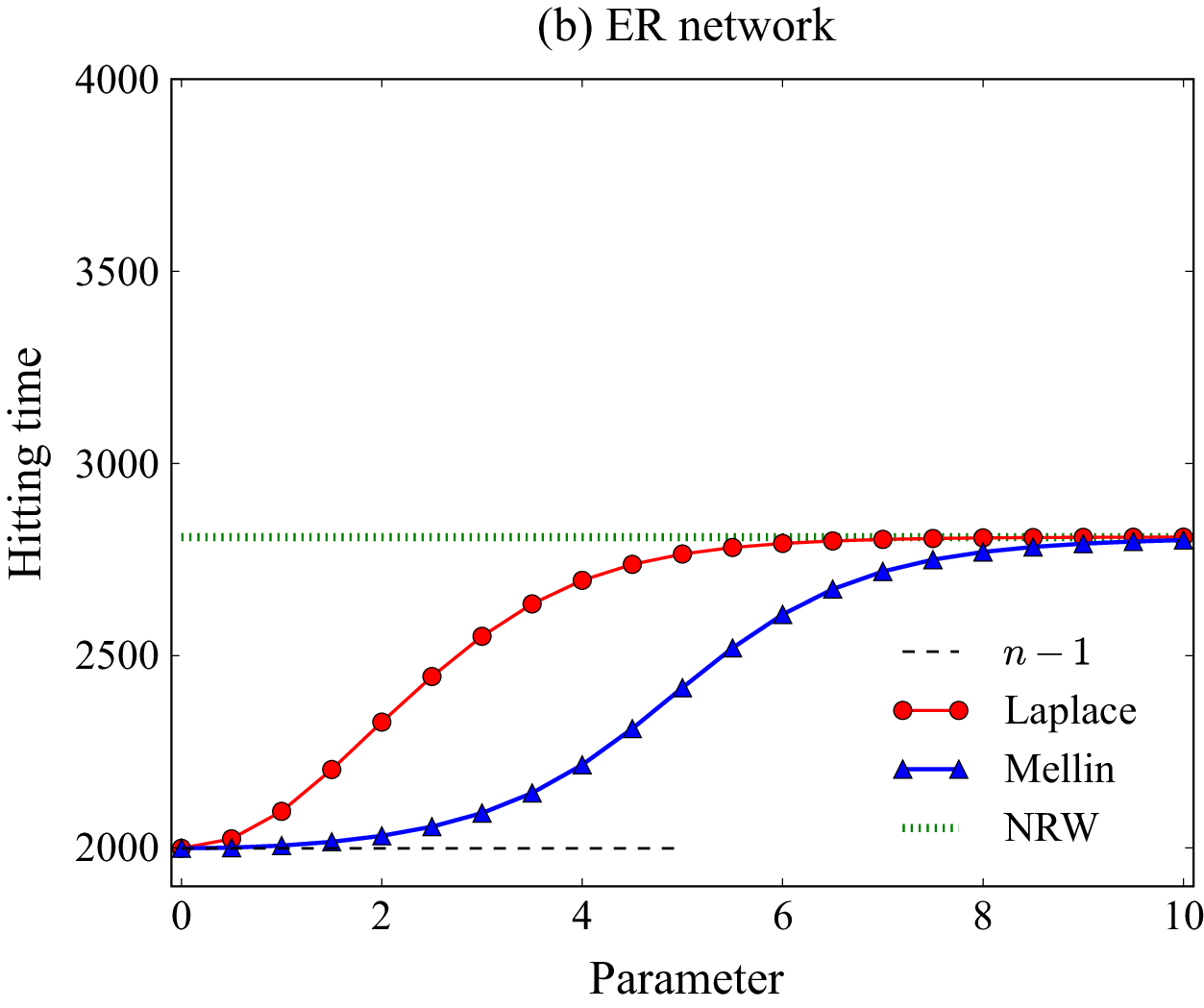}
\caption{\label{Figure_5_2} Influence of the degree distribution on the average
hitting time for random networks with $n=2000$ nodes. (a) Barab\'{a}si-Albert
(BA) network, (b) connected Erd\H{o}s-R\'{e}nyi (ER) networks with probability
$p=\log(n)/n$. We plot the results for the hitting time as a function
of the parameter $l$ for the Laplace transform and $s$ for the Mellin
transformation. We depict, with dashed lines, the results for the
normal random walk (NRW) and $n-1$ obtained for a complete graph. }
\end{figure}

As we have observed in the previous analysis there are very significant
differences between the random networks considered here and the graphs
analyzed in the previous section, where the average hitting time increases
as a third or second power of the number of nodes. The linear increase
observed here for the random graphs studied cannot be understood only
on the basis of the fact that they display relatively small shortest-path
distances. For instance, we can construct barbell graphs $B\left(n,\left\lfloor \tfrac{n-k}{2}\right\rfloor ,\left\lfloor \tfrac{n-k}{2}\right\rfloor \right)$
with small values of $k$, which have small average shortest-path
distance.  A graph $B\left(n,\left\lfloor \tfrac{n-k}{2}\right\rfloor ,\left\lfloor \tfrac{n-k}{2}\right\rfloor \right)$
has only distances $d_{ij}\in\left[1,k\right]$. One important difference,
however, between the studied random networks and the barbell and lollipop
graphs previously considered is the lack of large cliques in these
random graphs which may trap the random walker inside them. In the
next section we study this problem by using random graphs with different
intercommunity density of links. In addition, we study the influence
of the degree distribution on the hitting time of these random graphs
with the goal of understanding the differences between Erd\H{o}s-R\'{e}nyi
and Barab\'{a}si-Albert networks.

\subsection{Influence of communities and degree distribution}

We start here by considering the influence of the presence of clusters
of nodes defined in the following way. Let us consider a network with
$n$ nodes. Let us make a partition of the network in $k$ clusters
of size $\left\lfloor \frac{n}{k}\right\rfloor $. Let $C_{i}$ and
$C_{j}$ be two of such clusters. Then, the probability that two nodes
$p,q\in C_{i}$ are connected is much larger than the probability
that two nodes $r\in C_{i}$ and $s\in C_{j}$ are connected. This
gives rise to higher internal densities of links in the clusters than
the inter-cluster density of links. It is a well-known fact that neither
the Erd\H{o}s-R\'{e}nyi nor the Barab\'{a}si-Albert networks contain such
kind of clusters. The lack of such clusters---known in network theory
as communities---is characterized by the so-called good expansion
properties of these graphs. Loosely speaking a graph is an expander
if it does not contain any structural bottleneck, i.e., a small group
of nodes or edges whose removal separates the network into two connected
components of approximately the same size~\cite{hoory2006expander}.
We remark here that both ER and BA graphs have been proved to be expanders
when the number of nodes is very large~\cite{hoory2006expander,mihail2003certain}.
Then, we use here an implementation of the algorithm described by
Lancichinetti \textit{et al}.~\cite{lancichinetti2008benchmark} to produce
undirected random networks with communities with a fixed average degree
$\langle k\rangle$. A mixing parameter $\mu$ defines the fraction
of links that a node share with nodes in other communities. A small
value of the mixing parameter produces graphs with tightly connected
clusters which are poorly connected among them. That is, it produces
very well-defined communities in the graph. As the mixing parameter
approaches the value of one, the communities disappear and the graph
looks more and more as an expander for sufficiently large number of
nodes. 

Here, we explore the effect of communities in the capacity of the
multi-hopper random walk strategy to reach any site of the network
by constructing random graphs with the same number of nodes and edges
but changing the mixing parameter. In Fig.~\ref{Figure_4} we depict
the average hitting time $\langle H^{\tau}\rangle$ for different
values of the parameters $l$ and $s$ for networks with communities
constructed as described before. As can be seen here the
random graphs with well defined communities, i.e., small values of
the mixing parameter, makes that the random walker takes significantly
longer time to explore the whole network. This is particularly true
for relatively large values of the Mellin and Laplace parameters of
the multi-hopper model, which indicate that the normal RW is significantly
less efficient in networks having communities than in networks not
displaying such structural characteristic. Here again, as these parameters
approach zero the hitting time decays to the lower bound as expected
from the theory. In closing, the small hitting times observed for
the random graphs studied in the previous subsection are mainly due
to the fact that these graphs are expanders and they lack any community
structure, which may trap the random walker for longer times without
visiting other clusters. In those cases analyzed here where there are
communities, the multi-hopper solves this trapping problem due to
the fact that it is allowed to jump from one community to another,
reducing her time inside each of the clusters visited. 
\begin{figure}
\includegraphics[width=0.45\textwidth]{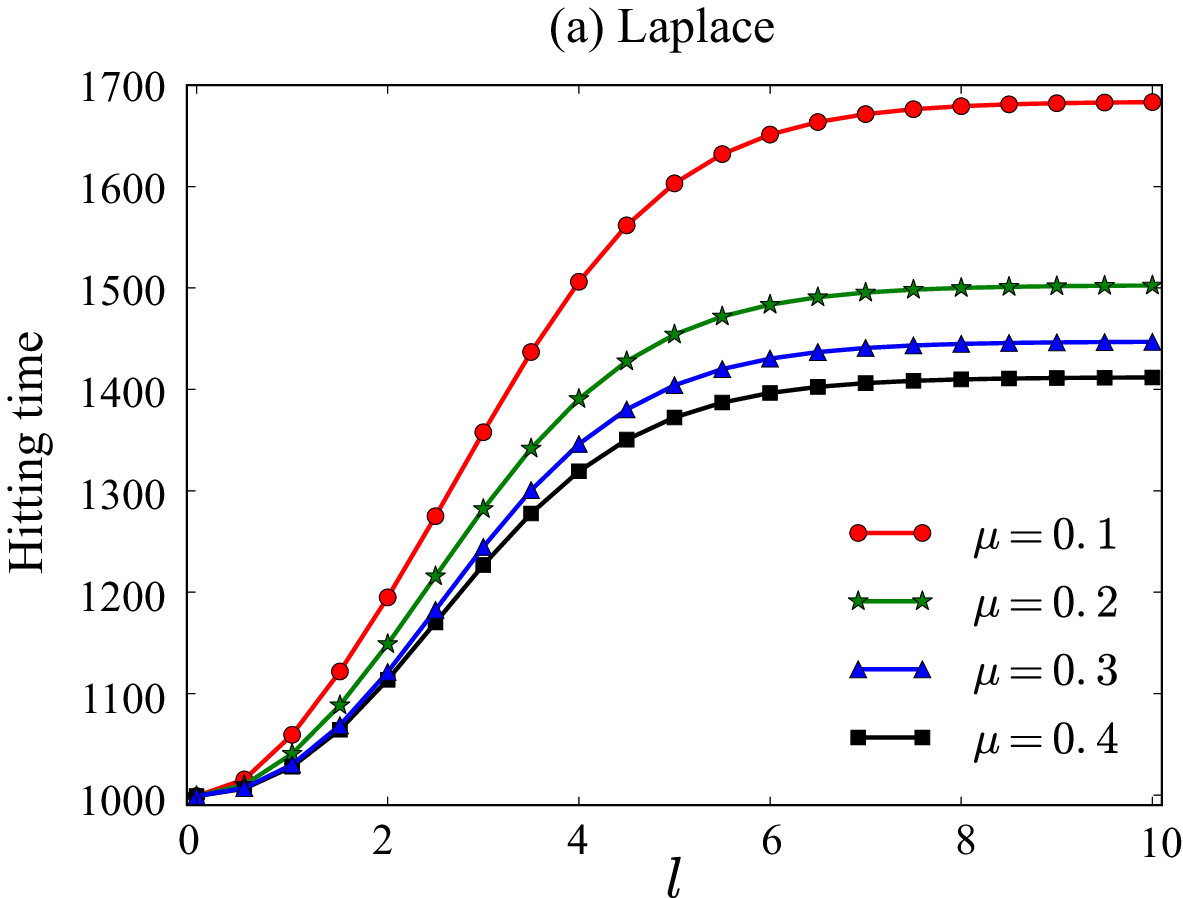}
\hfill
\includegraphics[width=0.45\textwidth]{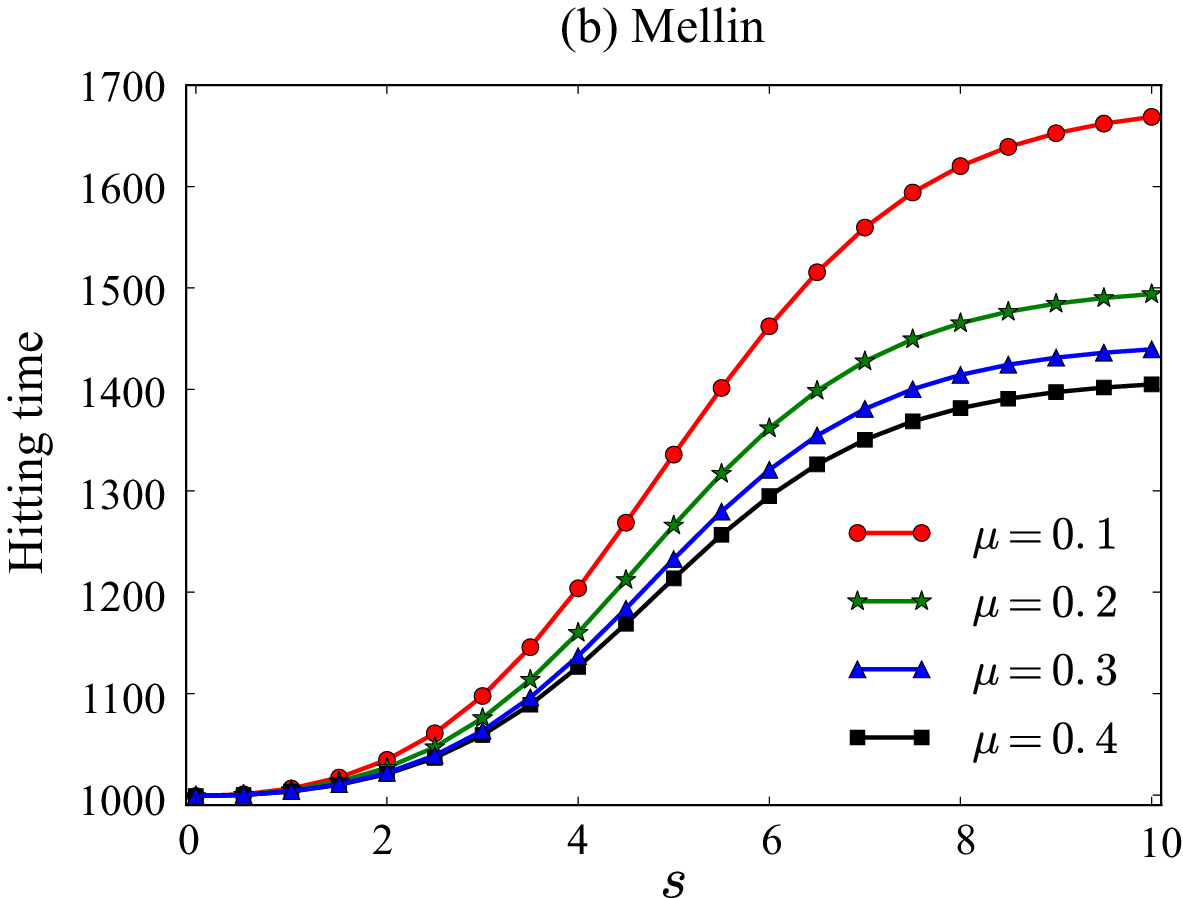}
 \caption{\label{Figure_4} Average hitting time for the multi hopper random
walker in networks with communities. (a) Laplace and (b) Mellin strategies.
We explore networks with $n=1000$ nodes, an average degree $\langle k\rangle=15$
and different values of the mixing parameter $\mu$ that defines the
fraction of connection that a node has with nodes in other communities. }
\end{figure}

Now we move to the consideration of the influence of the degree distribution
on the performance of the random multi-hopper. We then study the stationary
probability distribution $\pi^{\tau}(i)$ for the Laplace and Mellin
transformation in a Barab\'{a}si-Albert and an Erd\H{o}s-R\'{e}nyi network,
with $n=2000$ nodes. The results are obtained from the calculation
of the long-range degrees $k^{\tau}(i)$ in Eqs.~(\ref{LRDegreeM})--(\ref{LRDegreeE})
and the respective normalization defined in Eq.~(\ref{StatDistf}). 

In Fig.~\ref{Hitting_degree} we resume our results. The important
aspect of these plot is to consider the slope of the corresponding
curves for different values of the Mellin and Laplace transforms in
the multi-hopper model. If we compare the slopes for the ER network
with that of the BA one, we observe that the first is smaller and
closer to the constant line $n-1$ than the second. The smallest hitting
time is obtained when the slope coincides with this line, which represents
the fully connected graph. Thus, the ER graphs are already close
to this slope and this is the main reason why they display relatively
small hitting times. However, in the BA model when $s,l$ are very
large the slope of the curves are very steep and far away from the
asymptotic result. As soon as these parameters approach zero the slope
of the curves become more flat approaching $\pi^{\tau}(i)=n-1$ as
a consequence of the fact that the graph approaches the fully connected
one. That is, the long-range dynamics changes the way in which the
random walker reaches the nodes. For small values of the parameter
$s$ or $l$, the stationary probability distribution reaches the
value $\pi(i)=1/n$. On the other hand, the inverse of the stationary
probability distribution defines the average time $\left\langle t^{\tau}(i)\right\rangle =\frac{1}{\pi^{\tau}(i)}$
needed for the random walker to return for the first time to the node
$i$. In this way, the random walker returns to sites with high values
of $\boldsymbol{\pi}^{\tau}$ and, gets trapped in regions with this property.
As we can see in Fig.~\ref{Hitting_degree}, the effect of the long-range
strategies is to reduce the probability to revisit sites highly connected
and increasing the capacity to reach any site of the network. 
\begin{figure}
\includegraphics[width=\textwidth]{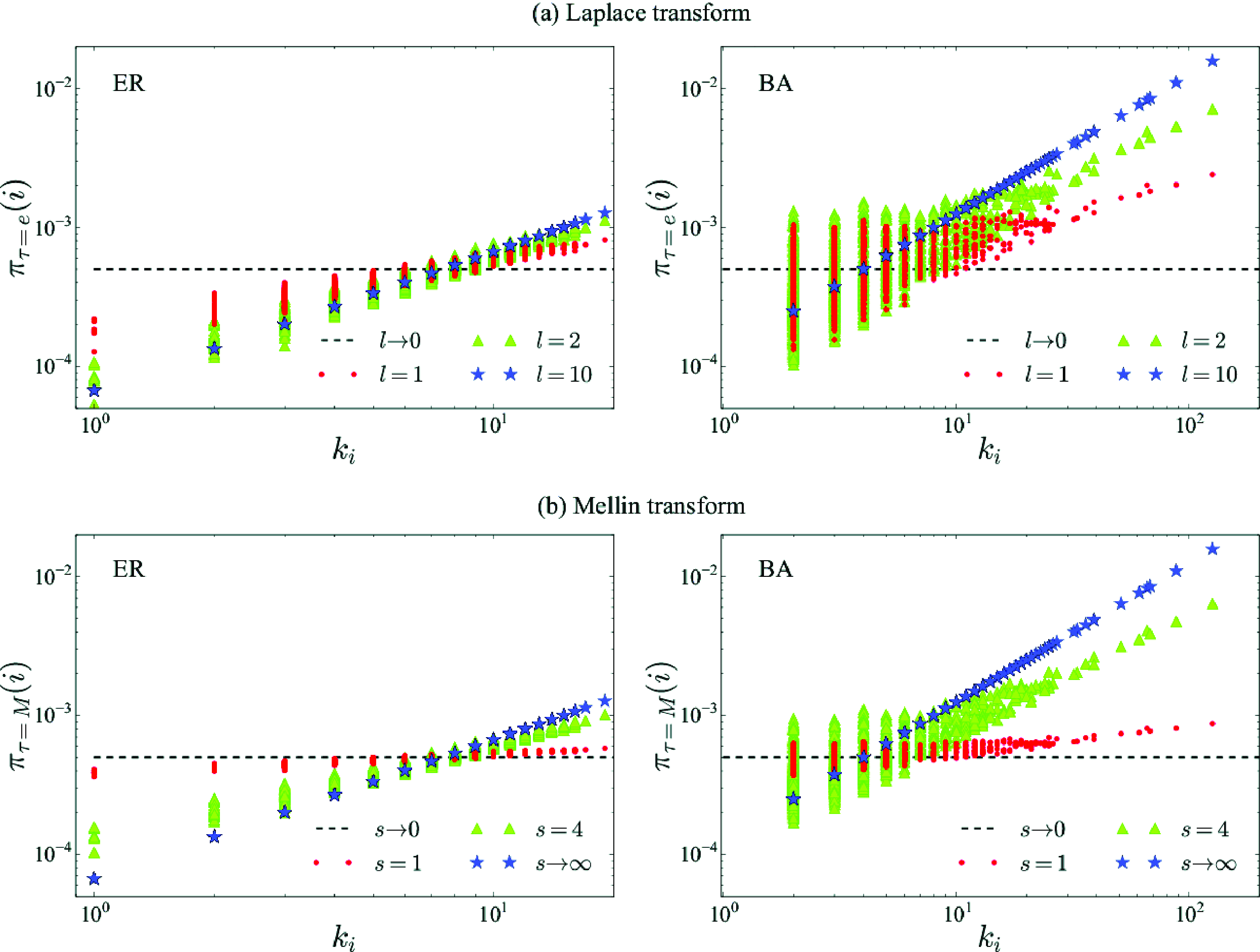}
\caption{Stationary probability distribution for multi-hopper random walkers
on Barab\'{a}si-Albert and Erd\H{o}s-R\'{e}nyi networks with $n=2000$ nodes.
(a) Laplace transform, (b) Mellin transform. }
\label{Hitting_degree}
\end{figure}

In closing, in this section we have seen that a random walker can
be trapped in certain regions of a network---i.e., having larger probability
of staying at these regions than in other parts of the graph--- due
to two different factors. The first is the presence of clusters of
highly connected nodes in which the random walker is retained for
long times before she visits other clusters of the graph. The second
is the existence of hubs---highly connected nodes---which make that
the random walker returns frequently to them making her exploration
of the network more difficult. These two characteristics, the presence
of communities and the existence of fat-tailed degree distributions,
are well known to be ubiquitous in real-world networks. Then, the
observation that the random multi-hopper overcome both of these traps
make this model an important election for the exploration of real-world
networks, which is the topic of our next section.

\section{Real-world networks}

One of the areas in which the random multi-hopper can show many potential
applications is in the study of large real-world networks. Normal
random walks on networks have been previously used as mechanisms of
transport and search on networks~\cite{adamic2001search,guimera2002optimal,noh2004random}.
These are graphs representing the networked skeleton of complex systems
ranging from infrastructural and technological to biological and social
systems. As an example of the potential applications of the random
multi-hopper for these systems we consider the exploration of the
electrical power grid of the western USA. In this case we compare
the normal RW with the multi-hopper by placing the walker at the node
having the largest closeness centrality in the network. This is the
node which is relatively closer to the rest of the nodes of the graph.
We compare the results with the selection of the initial node as the
ones having the smallest closeness centrality among all the nodes
in the graph, i.e., the one relatively farthest from all the other
nodes. In Fig.~\ref{power_grid} we illustrate the evolution of
the probability of finding the random walker at a given node of the
power grid at $t=50$, $t=500$, and $t=5000$. When the initial node
is the one with the largest closeness centrality, at relatively short
times (see the top-left panel of Fig.~\ref{power_grid}) the normal random
walk has left some regions of the power grid unexplored. This situation
is more critical when the initial node is the one with the smallest
closeness centrality (see the bottom-left panel of Fig.~\ref{power_grid}),
where the graph remains almost totally unexplored. In this last case
even when $t=500$ there are vast regions of the power grid that have
not been visited by the walker. On the contrary, the random multi-hopper
reaches the stationary state at very early times and at $t=50$ she
already has visited every node of the power grid independently of
whether the initial node has the largest or the smallest closeness
centrality. 

\begin{figure}
\includegraphics[width=\textwidth]{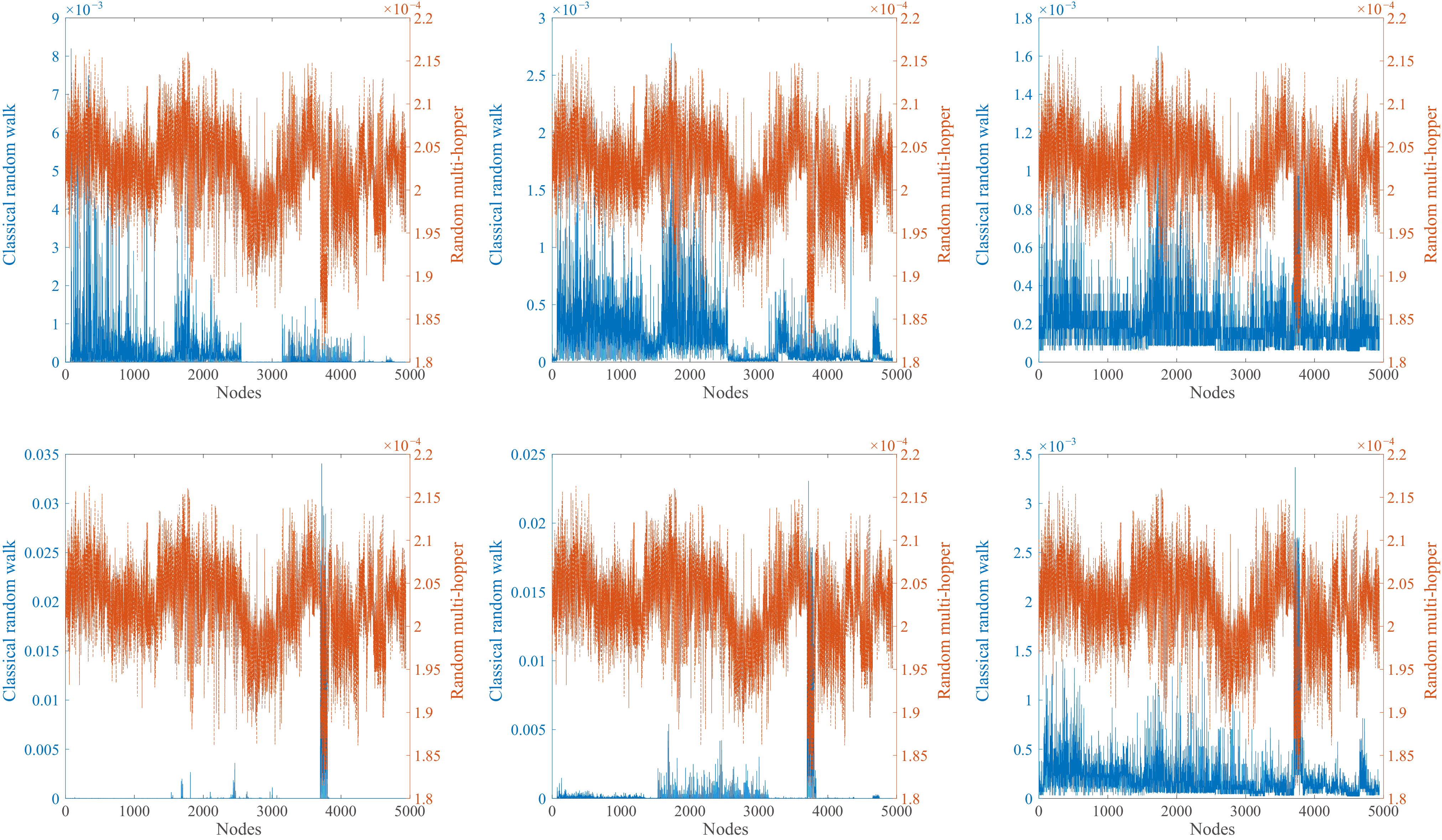}
\caption{Probability of finding the random walker at a given node of the western
USA power grid for the classical random walker (blue solid line) and
for the multi-hopper with $s=0.1$ (red broken line) and $l=0.01$
(red dotted line). The random walker started her walk at the node
with the largest (top panels) and the smallest closeness centrality
(bottom panels). The snapshots are taken at three different times,
at $t=50$ (left), $t=500$ (center), and $t=5000$ (right).}
\label{power_grid}
\end{figure}

We finally study a few networks representing a variety of real-world
complex systems, including biological, communication and infrastructural
ones. In Table~\ref{real world} we report the sizes of these networks
as well as the average hitting times using the normal random walk and the
multi-hopper with Mellin transform. By using the expression~(\ref{eq:s_bound})
we can estimate the lower bound for the value of the Mellin parameter
$s$ for which $\langle \hat{H}^{\textnormal{Mell}}\rangle \leq n$. These
values are given in Table~\ref{real world} as $s_{lower}$ for all
the networks studied in this section. In addition we calculate the
actual value of this parameter for which $\langle \hat{H}^{\textnormal{Mell}}\rangle \leq n$
in these networks and report it as $s_{c}$ in Table~\ref{real world}.
The values of $s_{c}$ are obtained as follow. We calculate the value
of $\langle \hat{H}^{\textnormal{Mell}}\rangle $ for different values
of $s$ and obtain a fit of the form: $\langle \hat{H}^{\textnormal{Mell}}\rangle \approx\alpha s^{2}+\beta$
for $0.01\leq s\leq0.5$. Obviously, $\beta=n-1$, which is the lowest
value obtained by $\langle \hat{H}^{\textnormal{Mell}}\rangle $ for any
graph. Using these fitting equations we then calculate the values
of $s_{c}$ reported in Table~\ref{real world}. As can be seen the
values of $s_{c}$ are as average 10 times larger than the lower bound
expected from the lollipop graphs of the same size as the studied
networks. 

\begin{table}
\begin{center}
\begin{tabular}{|>{\centering}p{3cm}|>{\centering}p{1.5cm}|>{\centering}p{1.5cm}|>{\centering}p{1.5cm}|>{\centering}p{1.5cm}|>{\centering}p{1.2cm}|}
\hline 
network & $n$ & $\left\langle H\right\rangle $ & $s_{c}$ & $s_{B}$ & \% impr.\tabularnewline
\hline 
\hline 
Bio\_PPI\_yeast & 2,224 & 8,652 & 0.145 & 0.0128 & 389\tabularnewline
\hline 
City\_Atlanta & 3,234 & 11,973 & 0.088 & 0.0106 & 370\tabularnewline
\hline 
Colab\_Geom & 3,621 & 15,719 & 0.086 & 0.0100 & 434\tabularnewline
\hline 
City\_Berlin & 4,495 & 13,752 & 0.094 & 0.0090 & 306\tabularnewline
\hline 
Power\_USA & 4,941 & 34,455 & 0.098 & 0.0086 & 697\tabularnewline
\hline 
City\_Barcelona & 5,575 & 17,282 & 0.063 & 0.0081 & 310\tabularnewline
\hline 
Colab\_AstroPh & 17,903 & 101,072 & 0.048 & 0.0045 & 565\tabularnewline
\hline 
City\_Seattle & 20,207 & 108,746 & 0.041 & 0.0042 & 538\tabularnewline
\hline 
Colab\_CondMat & 21,363 & 77,298 & 0.049 & 0.0041 & 362\tabularnewline
\hline 
Comm\_Enron & 33,696 & 193,714 & 0.040 & 0.0033 & 575\tabularnewline
\hline 
\end{tabular}
\end{center}
\caption{Real-world networks studied in this work, their number of nodes $n$
and the average hitting time of the normal random walker $\left\langle H\right\rangle $.
$s_{c}$ is the value of the Mellin parameter $s$
for which the corresponding network has hitting time smaller than
$n$. The value of $s_{B}$ is the Mellin parameter
$s$ for which the corresponding a lollipop graph $L\left(n,\left\lfloor \frac{3n}{2}\right\rfloor \right)$
with the same number of nodes as the real-world network has hitting
time smaller than $n$. The last column, \% improv.,
represents the percentage of improvement in the hitting time using
the Mellin-transformed multi-hopper respect to the NRW.}
\label{real world}
\end{table}

\section{Conclusions}

We develop here a mathematical and computational framework for using
random walks with long-range jumps on graphs of any topology. This
multi-hopper model allows a random walker positioned at a given node
of a simple, connected graph to jump to any other node of the graph
with a probability that decays as a function of the shortest-path
distance between her original and final positions. The decaying probabilities
for long-range jumps are selected as Laplace or Mellin transforms
of the shortest-path distances in this work. We prove here that when
the parameters of these transforms approach asymptotically zero,
the hitting time in the multi-hopper approaches the minimum possible
value for a normal random walker. Thus, the multi-hopper represents
a super-fast random walker hopping among the nodes of a graph. We
show by computational experiments that the multi-hopper overcomes
several of the difficulties that a normal random walker has to explore
a graph. For instance, the multi-hopper explores more efficiently
a graph having clusters of highly interconnected nodes, which are
poorly connected to other clusters, i.e., the presence of network
communities, than the normal random walker. It also overcomes the
normal random walker in those graphs with very skew degree distributions,
such as scale-free networks. In these graphs, the normal random walker
visits more frequently the hubs than the nodes of low degree, thus
getting stacked around the high-degree nodes of the graph. Finally,
we illustrate how the multi-hopper can be useful for transport and
search problems in real-world networks where these structural heterogeneities,
i.e., presence of communities and skew degree distributions, are more
a rule than an exception. We hope that the use of the random multi-hopper
will open new avenues in the exploration of lattices, graphs and real-world
networks.

\section*{Acknowledgment}

EE thanks the Royal Society for a Wolfson Research Merit Award.
NH acknowledges support of JSPS KAKENHI Grant Number JP15K05200.
JCD acknowledges support from: the FRS-FRNS (National Fund of Scientific Research); the Interuniversity Attraction Poles Programme DYSCO (Dynamical Systems, Control, and Optimisation), initiated by the BELSPO (Belgian Science Policy Office); and the ARC (Action de Recherche Concert\'{e}e) on Mining and Optimization of Big Data Models, funded by the Wallonia-Brussels Federation.
MTS acknowledges support from the European Union's Horizon 2020 research and innovation programme under the Marie Sklodowska-Curie grant agreement No 702410.

\bibliographystyle{siam}
\bibliography{ref}

\end{document}